\newif\ifreport\reporttrue
\newif\if@restonecol
\newtheorem{theorem}{Theorem}
\newtheorem{lemma}{Lemma}
\newtheorem{proposition}{Proposition}
\newtheorem{corollary}{Corollary}
\newtheorem{definition}{Definition}
\newtheorem{remark}{Remark}[theorem]
\DeclareMathOperator*{\argmin}{arg\,min}
\DeclareMathOperator*{\argmax}{arg\,max}
\begin{document}

\title{Concurrent Channel Probing and Data Transmission in Full-duplex MIMO Systems}

%
\ifreport
\author{
Zhenzhi Qian, Fei Wu, Zizhan Zheng, Kannan Srinivasan, and Ness B. Shroff

\thanks{Zhenzhi Qian, Fei Wu and Kannan Srinivasan are with the Department of CSE, The Ohio State University, Columbus, OH, 43210 (e-mail: qian.209@osu.edu, wu.1973@osu.edu, kannan@cse.ohio-state.edu).}
\thanks{Zizhan Zheng is with the Department of Computer Science, Tulane University, New Orleans, LA 70118. (e-mail: zzheng3@tulane.edu).}
\thanks{Ness B. Shroff is with the Departments of ECE and CSE, The Ohio State University, Columbus, OH, 43210 (e-mail: shroff.11@osu.edu).}
}
\else
\numberofauthors{5}
\author{
\alignauthor
Zhenzhi Qian\\
       \affaddr{Department of CSE}\\
       \affaddr{The Ohio State University}\\
       \email{qian.209@osu.edu}
\alignauthor
Fei Wu\\
       \affaddr{Department of CSE}\\
       \affaddr{The Ohio State University}\\
       \email{wu.1973@osu.edu}
\alignauthor
Zizhan Zheng\\
       \affaddr{Department of CS}\\
       \affaddr{Tulane University}\\
       \email{zzheng3@tulane.edu}
       \and
\alignauthor
Kannan Srinivasan\\
       \affaddr{Department of CSE}\\
       \affaddr{The Ohio State University}\\
       \email{\mbox{kannan@cse.ohio-state.edu}}
\alignauthor
Ness B. Shroff\\
       \affaddr{Department of ECE and CSE}\\
       \affaddr{The Ohio State University}\\
       \email{shroff.11@osu.edu}
}
\fi


\maketitle


\begin{abstract}

An essential step for achieving multiplexing gain in MIMO downlink systems is to collect accurate channel state information (CSI) from the users. Traditionally, CSIs have to be collected before any data can be transmitted. Such a sequential scheme incurs a large feedback overhead, which substantially limits the multiplexing gain especially in a network with a large number of users. In this paper, we propose a novel approach to mitigate the feedback overhead by leveraging the recently developed Full-duplex radios. Our approach is based on the key observation that using Full-duplex radios, when the base-station (BS) is collecting CSI of one user through the uplink channel, it can use the downlink channel to simultaneously transmit data to other (non-interfering) users for which CSIs are already known. By allowing concurrent channel probing and data transmission, our scheme can potentially achieve a higher throughput compared to traditional schemes using Half-duplex radios. The new flexibility introduced by our scheme, however, also leads to fundamental challenges in achieving throughout optimal scheduling. In this paper, we make an initial effort to this important problem by considering a simplified group interference model. We develop a throughput optimal scheduling policy with complexity $O((N/I)^I)$, where $N$ is the number of users and $I$ is the number of user groups. To further reduce the complexity, we propose a greedy policy with complexity $O(N\log N)$ that not only achieves at least 2/3 of the optimal throughput region, but also outperforms any feasible Half-duplex solutions. We derive the throughput gain offered by Full-duplex under different system parameters and 
show the advantage of our algorithms through numerical studies.
	
\end{abstract}

\ifreport
\else


\printccsdesc

\keywords{Scheduling, Full-duplex, near-optimal throughput}
\fi

\section{Introduction}
\label{sec:introduction}
Mobile data traffic is expected to increase at rate of $53\%$ per year by 2020 \cite{cisco}. Multi-user MIMO (MU-MIMO), which can potentially increase the network capacity linearly with the number of users, has been considered as an important technique to confront this data traffic challenge. Theoretically, in a system with $M$ transmit and receive antennas, the throughput using MU-MIMO can be $M$ times of the throughput using a single transmit and receive antenna pair \cite{tse2005fundamentals}, where $M$ is commonly referred as the spatial multiplexing gain. 

In this paper, we consider one important application of MU-MIMO, i.e., the downlink wireless cellular network consisting of one Base Station (BS) equipped with many antennas and many users each equipped with one antenna. In such systems, the BS could utilize MU-MIMO to transmit multiple data streams to multiple users simultaneously. Nevertheless, to take the advantage of MU-MIMO in practice, it is prerequisite for the transmitter to learn the accurate channel state information (CSI) of the users \cite{liu2016understanding}. Note that in traditional wireless networks, radios can only operate in Half-duplex (HD) mode, i.e., a radio cannot transmit and receive packets on the same frequency at the same time. As a result, traditional schemes to harness the multiplexing gain of MU-MIMO, e.g., \cite{introduction04,zhou2015signpost}, requrie that the channel state information (CSI) of the users have to be learned first before any data can be transmitted. Such a sequential channel learning scheme incurs a large overhead when there are a large number of users, which would in turn substantially limit the multiplexing gains of MU-MIMO, especially if the channel coherence time is relatively short \cite{zhou2015signpost,introduction04},  {\it The large channel learning overhead has been a long-standing open problem which limits the achievable throughput of MU-MIMO in practice.}



Recently, Full-duplex (FD) radios  \cite{choi2010achieving, duarte2012experiment, bharadia2013full} have been developed, which allow simultaneous transmission and reception on the same frequency. The availability of Full-duplex provides significant flexibility in designing wireless resource allocation algorithms. For example, it has been shown that in some cases \cite{yang2014characterizing}, Full-duplex can almost double the throughput and effectively improve spectrum efficiency. This leads to the following natural and important question: \emph{Is it possible to leverage Full-duplex to address the feedback overhead challenge in Multi-user MIMO downlink systems?}

In this paper, we answer this question in the affirmative. 
By using a Full-duplex BS, we are able to break the boundary between the channel learning phase and the data transmission phase. As shown in Fig. \ref{fig:scheme}, the BS receives the channel probing signal from Alice in round 1 and measures the downlink channel to Alice assuming channel is reciprocal\footnote{Measuring downlink channel to a user through channel probing from the user is standard in a time division duplex (TDD) system \cite{introduction04,zhou2015signpost}.}. Then in round 2, the BS uses Full-duplex capability to send data to Alice and receive the probing signal from Bob simultaneously, assuming Bob does not interfere with Alice. After the BS measures all downlink channels, the BS operates in MU-MIMO mode in round 3. Compared to Half-duplex systems, once the BS knows the downlink channel to Alice, it can start transmission immediately rather than waiting until the end of the channel learning phase. 
Henceforth, we will refer to this concept as {\it concurrent channel probing and data transmission}. 

\begin{figure}[htbp]
	\centering
	\includegraphics[width=4in]{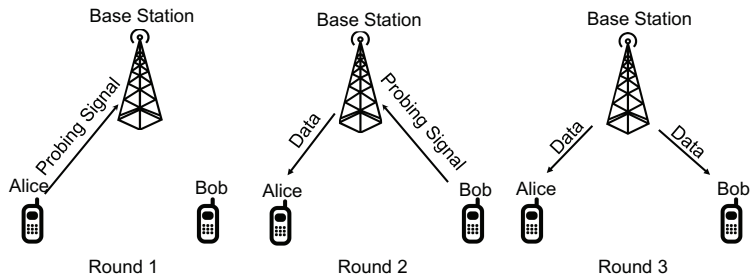}
	\vspace{-0.5cm}
	\caption{Concurrent channel probing and data transmission.}\label{fig:scheme}
\end{figure}

Due to the interference between users, the performance of concurrent channel probing and data transmission scheme depends highly on the set of users selected to send probing signals and the ordering of these users. Therefore, the following important question remains: \emph{How do we design a low-complexity scheduling policy that achieves provably good throughput performance under the concurrent channel probing and data transmission?} 

While the design of high performance scheduling policies have been extensively studied in traditional wireless systems \cite{lin2006tutorial}, relatively few efforts \cite{yang2015scheduling} have focused on the scheduling problem in Full-duplex systems. In particular, it is much more challenging to consider this problem under concurrent channel probing and data transmission. The reason is that: 1) The ordering of users sending probing signal matters. A user that sends a probing signal earlier also starts transmission earlier. 2) Within one channel coherence time, the scheduling decisions are coupled in terms of time and interference relations. The rate received by a certain user depends on what time it transmits the probing signal as well as the interference relations with the users scheduled to send probing signals later. These two facts make the scheduling problem more complicated and classical scheduling policies do not apply here.
In this paper, we aim to develop a throughput near-optimal scheduling policy and investigate the Full-duplex gain for a various of network settings. 

The key contributions of this paper are summarized as follows:
\begin{itemize}
	\item We develop a scheduling policy that achieves the optimal throughput region under concurrent channel probing and data transmission. Compared to Brute-Force search, the complexity has been decreased from $O(N!)$ to $O((N/I)^I)$.
	\item To further reduce the scheduling complexity in large systems, we design a greedy policy with complexity $O(N\log N)$ that not only achieves at least 2/3 of the optimal throughput region but also outperforms any feasible Half-duplex solutions. We conjecture that the real performance of the greedy policy is very close to the optimal, which is confirmed by simulations. 
	\item We derive the Full-duplex gain under different system parameters and use simulations to validate our theoretical results. 
\end{itemize}


The rest of the paper is organized as follows. We discuss related works in Section \ref{sec:related}. In Section \ref{sec:system}, we describe the system model and problem formulation. In Section \ref{sec:opt}, we develop a throughput optimal policy which stabilizes the system under any feasible arrival rates. In Section \ref{sec:greedy}, we design a low-complexity greedy policy and provide provable performance guarantees. In Section \ref{sec:discuss}, we derive the Full-duplex gain of different network settings and system parameters. We conduct simulations to validate our theoretical results in Section \ref{sec:sim} and make concluding remarks in Section \ref{sec:conclusion}.

\section{Related Work}\label{sec:related}
In-band Full-duplex, as an emerging technology in wireless communication, was implemented by combining RF and baseband interference cancellation \cite{choi2010achieving, duarte2012experiment, bharadia2013full}, enabling simultaneous bi-directional transmission between a pair of nodes. Full-duplex has now been widely studied in a number of wireless communication scenarios. Full-duplex WiFi-PHY based MIMO radios was first implemented in \cite{bharadia2014full}, and experiments showed that the theoretical doubling of throughput is practically achieved.
While it is hard to make Full-duplex MIMO radios fit in small personal devices, it is feasible to build a Full-duplex MIMO Base Station due to bigger size and more powerful computational ability \cite{everett2016spatial}. In \cite{du2014mimo, du2015mu}, the authors proposed the continuous feedback channel, which enables sequential beamforming that update weights while also performing downlink transmission. The authors showed that the system outperforms its Half-duplex counterpart  and reduced the control overhead at the same time. This work can be viewed as an preliminary attempt of the idea of concurrent channel probing and data transmission. However, the authors assumed that users are symmetric and did not consider the scheduling problem, which is the focus of our study here.

In addition to the research efforts focused on implementation and experiments, there have also been several theoretical works on Full-duplex systems. Although Full-duplex is expected to double the capacity in single pair of nodes, \cite{xie2014does} showed that the inter-link interference and spatial reuse substantially reduces network-level Full-duplex gain, making it less than 2 in typical cases. In order to deal with the increasing inter-link interference, \cite{sahai2013uplink} presented a new interference management strategy to achieve a larger rate gain over Half-duplex systems. The capacity region of multi-channel Full-duplex links was characterized in \cite{maravsevic2016capacity} and rate gain is illustrated for various channel and cancellation scenarios. The authors in \cite{yang2014characterizing} also investigated the achievable throughput performance of MIMO, Full-duplex and their variants that allow simultaneous activation of two RF chains. The scheduling problem in Full-duplex cut-through transmission was considered in \cite{yang2015scheduling}, where the authors characterized the interference relationship between links in the network with cut-through transmission and designed a Q-CSMA type of scheduling algorithm to leverage the flexibility of Full-duplex cut-through transmission. In contrast to the aforementioned works, this is the first work
that considers the scheduling problem under concurrent channel probing and data transmission and provides analytical framework to characterize the network-level Full-duplex gain.

\section{System model}\label{sec:system}
We consider the downlink phase of a single-cell Full-duplex MIMO system. There are $N$ users in this system and each of them is equipped with only one antenna. The Base Station (BS) has multiple antennas and Full-duplex capability. In addition, we assume time is slotted and we consider a discrete-time system. We use $\mathcal{N}$ denote the set of all users in the system.
\subsection{Channel Model}
We consider a block fading channel, where the channel state remains the same within each time-slot, but may vary from time-slot to time-slot. We assume channel state information (CSI) is only available at the user side at the beginning of each time-slot. In order to fully achieve the multiplexing gain of MU-MIMO, the BS needs to collect CSI via feedback through the uplink channel. We assume that channels are reciprocal, in which case a user could send a probing signal on its single antenna and the BS, by measuring on its antennas, learns the downlink CSI.  Any CSI expires by the end of the current time-slot, and it has to be learned again in the next time-slot. In practice, collecting CSI from multiple users takes time and its overhead is linear with respect to the number of the corresponding users. We assume that in one time-slot, the transmitter can collect CSI from at most $K$ users. 
Therefore, each time-slot can be further divided into $K$ mini-slots and it takes one mini-slot to learn each CSI. The BS can only transmit one packet per mini-slot to each user whose channel information is already known. 

In traditional Half-duplex systems, CSI collection and data transmission must be separated in time to avoid interference. Data transmission phase starts only if all desired CSIs are collected. 
Full-duplex systems, on the other hand, allows data transmission immediately after each CSI is collected.
\subsection{User Groups}
Full-duplex capability does not always offer ``free lunch'', its performance suffers from complex interference patterns. One way to characterize interference is using user groups which guarantee no inter-group interference. Thus, we can break the scheduling problem into two steps:
1) Given $N$ users, how to divide them into different user groups. 2) Given group information, how to find a scheduling policy that achieves good throughput performance. Dividing users into groups is not easy due to the conflict between interference constraints and the desire to have more groups and less users in each group. We focus on the second step in this work and leave the joint problem as the future work. The problem is still challenging even when the group information is already given.

Assume $N$ users are split into $I$ user groups, which guarantees no inter-group interference. For example, suppose user $u_i$ and $u_j$ are from different groups, the uplink stream of user $u_i$ does not interfere with the downlink stream of user $u_j$. 
Based on each user's geographical statistics, the group information will be determined once over a much larger time scale. The group information is assumed to be static and remains the same from time-slot to time-slot. Fig. \ref{fig:system} is an illustration of a downlink system with 2 user groups. We use $g(u)$ to denote the group index of user $u$, and let $\mathcal{G}_{g(u)}$ denote the set of users in group $g(u)$.

\begin{figure}[htbp]
	\centering
	\includegraphics[width=2.8in]{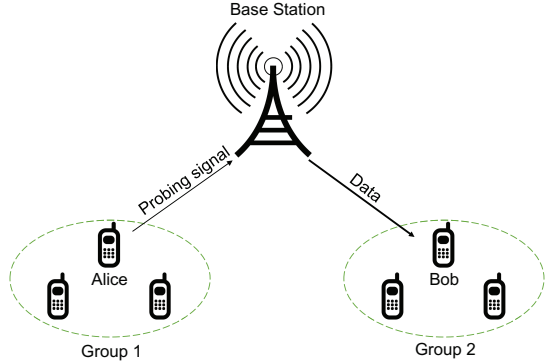}
	\caption{A downlink system with 2 user groups, the BS receives probing signal from Alice and transmits data packets to Bob (channel is already known) simultaneously.}\label{fig:system}
	\vspace{-0.4cm}
\end{figure}

\subsection{Traffic Model}
The BS maintains a queue $Q_u$ to store packets requested by each user $u$. The arrival process to each queue is assumed to be stationary
and ergodic. We assume packet arrival and departure both occur at the beginning of each time-slot. Let $A_u[t]$ denote
the number of packet arrivals to queue $Q_u$ in time-slot $t$. Let $R_u[t]$ denote the downlink rate to queue $Q_u$ in time-slot $t$. The queue-length $Q_u[t]$ evolves as:
\begin{align}\label{Q_evo}
Q_u[t+1]=\max\left\{Q_u[t]+A_u[t]-R_u[t], 0\right\}.
\end{align}
\subsection{Scheduling Policy}
In each time-slot $t$, a scheduling policy $P$ determines the schedule based on the system state, e.g., queue-length and delay. Such schedule can be described as a scheduling vector $\mathbf{f}=(u_1,\cdots, u_K)$, which indicates that user $u_i$ sends a probing signal in the $i^{th}$ mini-slot. $u_i=0$
implies that the BS is only transmitting, not learning any channel in the $i^{th}$ mini-slot. ``0'' element is also considered as a dummy user from a dummy group with zero queue-length. Due to interference constraints, 
once the BS chooses to learn user $u$'s channel during the $i^{th}$ mini-slot, it will block all other users in $\mathcal{G}_{g(u)}$ from receiving any packet. However, the BS can transmit data packets to users from other groups since there is no interference between these groups. 
We use $R_{u_i}^{\mathbf{f}}$ to denote the downlink rate to user $u_i$ under scheduling vector $\mathbf{f}$. For all $i=1, \dots, K$, $R_{u_i}[t]=R_{u_i}^{\mathbf{f}}$ if scheduling vector $\mathbf{f}$ is adopted in time-slot $t$. From now on, we omit the subscript $[t]$ when looking into the schedule made in a certain time-slot $t$. Note that $R_{u_i}^{\mathbf{f}}$ is the number of mini-slots from $i+1$ to $K$ such that the group of the scheduled user is different from group $g(u_i)$, i.e., $R_{u_i}^{\mathbf{f}}=\sum_{j=i+1}^{K}\mathbf{1}_{\{g(u_i)\neq g(u_j)\}}$.
For example, if $\mathbf{f}=(u_a, u_b, u_c,0,\cdots,0)$ and $g(u_a)=g(u_b)\neq g(u_c)$. 
From the second mini-slot to the $K^{th}$ mini-slot, there are $K-2$ users in $\mathbf{f}$ such that its group is other than $g(u_a)$. Thus, $R_{u_a}^{\mathbf{f}}=K-2$. Similarly, we have $R_{u_b}^{\mathbf{f}}=K-2$ and $R_{u_c}^{\mathbf{f}}=K-3$.
Denote the set of feasible scheduling policies as $\Pi$.

In this paper, we mainly focus on the throughput performance of the system. First we define the \emph{optimal throughput region} for any given system parameters $N$ and $K$.
As in \cite{andrews2004scheduling, ji2015achieving}, a stochastic queueing network is said to be stable if behaves as a discrete-time countable Markov chain and the Markov chain is \emph{stable} in the following sense: 1) The set of positive recurrent states is non-empty. 2) It contains a finite subset such that with probability one, this subset is reached within finite time from any initial state. When all the states communicate, stability is equivalent to the Markov chain being positive recurrent \cite{neely2013delay}. 
The \emph{throughput region} $\Lambda^{P}$ of a scheduling policy $P$ is defined as the set of arrival rate vectors for which the network remains stable under this policy. 
\begin{definition}
	(\textit{Optimal throughput region})
	The optimal throughput region is defined as the union of the throughput regions of all possible scheduling policies, which is denoted by $\Lambda^*$, i.e., 
	\begin{align}
	\Lambda^*=\bigcup_{P\in\Pi}\Lambda^{P}.
	\end{align}
\end{definition}
\begin{definition}
	(\textit{Throughput optimal policy})
	A scheduling policy is throughput-optimal if it can stabilize any arrival rate vector strictly inside $\Lambda^*$.
\end{definition}



\section{Optimal Scheduling Policy}\label{sec:opt}
In this section, we propose a throughput-optimal scheduling policy to the concurrent probing and transmission problem. We first observe that the following classic result applies to our setting as well.

\begin{theorem}
	Any policy that maximizes the weight $w(\mathbf{f})=\sum\limits_{u\in\mathcal{N}} Q_u R_{u}^{\mathbf{f}}$ in each time-slot, a.k.a., the MaxWeight scheduling policy, is throughput-optimal.\label{thm:M}
\end{theorem}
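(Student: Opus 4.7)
The plan is to follow the classical Lyapunov drift argument of Tassiulas--Ephremides, adapted to the per-slot rate vectors induced by the scheduling vectors $\mathbf{f}\in\Pi$. I would take the quadratic Lyapunov function $L(\mathbf{Q})=\tfrac12\sum_{u\in\mathcal N}Q_u^2$ and compute the conditional one-step drift $\Delta(\mathbf{Q})\triangleq\mathbb{E}[L(\mathbf{Q}[t+1])-L(\mathbf{Q}[t])\mid\mathbf{Q}[t]=\mathbf{Q}]$ using the queue recursion \eqref{Q_evo}. Dropping the $\max\{\cdot,0\}$ and bounding $(A_u-R_u)^2$ by a constant $B$ (which is finite because per-slot arrivals have finite second moments by the stationarity/ergodicity assumption, and the per-slot service $R_u\le K$ is uniformly bounded), one obtains the standard inequality
\begin{equation*}
\Delta(\mathbf{Q})\;\le\; NB \;+\; \sum_{u\in\mathcal N} Q_u\,\lambda_u \;-\; \mathbb{E}\!\left[\sum_{u\in\mathcal N} Q_u\,R_u[t]\,\Big|\,\mathbf{Q}\right].
\end{equation*}

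Next I would characterize the optimal throughput region. Let $\mathcal{R}=\{(R_u^{\mathbf f})_{u\in\mathcal N}:\mathbf f\in\Pi\}\subset\mathbb{R}_+^N$ be the finite set of per-slot rate vectors induced by scheduling vectors, and let $\mathrm{Conv}(\mathcal R)$ be its convex hull. A standard argument (identical to the one in \cite{lin2006tutorial, neely2013delay}) shows that $\Lambda^*$ equals the set of $\boldsymbol\lambda\in\mathbb{R}_+^N$ that are dominated componentwise by some point in $\mathrm{Conv}(\mathcal R)$: the ``only if'' direction uses that no policy can serve faster than the long-run time-average of its chosen rate vectors, and the ``if'' direction is realized by a stationary randomized policy that picks $\mathbf f\in\Pi$ from a fixed distribution independent of the state. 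Hence for any $\boldsymbol\lambda$ strictly inside $\Lambda^*$ there exists $\epsilon>0$ and a stationary randomized policy $P^{\mathrm{rand}}$ whose per-slot rate vector $\mathbf R^{\mathrm{rand}}$ satisfies $\mathbb{E}[R_u^{\mathrm{rand}}]\ge \lambda_u+\epsilon$ for every $u$.

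The crucial step is then the MaxWeight inequality: since in each slot MaxWeight picks $\mathbf f^\star\in\arg\max_{\mathbf f}\sum_u Q_u R_u^{\mathbf f}$, we have, deterministically conditional on $\mathbf{Q}$,
\begin{equation*}
\sum_{u} Q_u\,R_u^{\mathbf f^\star}\;\ge\;\sum_{u} Q_u\,\mathbb{E}[R_u^{\mathrm{rand}}]\;\ge\;\sum_{u} Q_u\,(\lambda_u+\epsilon).
\end{equation*}
Plugging this into the drift bound yields $\Delta(\mathbf Q)\le NB-\epsilon\sum_u Q_u$, which is negative whenever $\sum_u Q_u>NB/\epsilon$. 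The Foster--Lyapunov criterion (in the form used in \cite{andrews2004scheduling, ji2015achieving}) together with the irreducibility/communication structure of the queue-length Markov chain then gives positive recurrence and hence stability in the sense defined in the paper, which is what ``throughput-optimal'' means here.

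The main obstacle I expect is not the drift calculus, which is routine, but the characterization of $\Lambda^*$ in terms of $\mathrm{Conv}(\mathcal R)$: one has to justify that restricting to stationary randomized policies loses nothing, and that the coupling across mini-slots introduced by concurrent probing and transmission does not create hidden constraints beyond what is captured by the finite set $\mathcal R$ of per-slot vectors $R_u^{\mathbf f}=\sum_{j=i+1}^{K}\mathbf 1_{\{g(u_i)\ne g(u_j)\}}$. This is handled by observing that the scheduling decision in slot $t$ produces a single vector in $\mathcal R$ and that the time-average of any admissible sequence of such vectors lies in $\mathrm{Conv}(\mathcal R)$ by Carathéodory/ergodic arguments, after which the remainder of the proof is the textbook MaxWeight stability argument.
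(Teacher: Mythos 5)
Your proposal is correct and follows essentially the same route as the paper, which simply defers to the classical Tassiulas--Ephremides result \cite{tassiulas1992stability}: you have written out the standard argument that citation contains (quadratic Lyapunov drift, characterization of $\Lambda^*$ via the convex hull of the finite set of per-slot rate vectors $R_u^{\mathbf f}$, comparison with a stationary randomized policy, and Foster--Lyapunov), and your observation that concurrent probing/transmission only matters through the per-slot rate vector it induces is exactly why the classical result applies here. The only minor caveat is to phrase the conclusion in terms of the paper's stability definition (positive recurrence of a reachable finite set) rather than assuming irreducibility outright, and to note explicitly the finite-second-moment assumption on arrivals, which the paper itself uses later in its own drift argument.
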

\begin{proof}
	Please refer to the proof in \cite{tassiulas1992stability}.
\end{proof}

From the theorem, it suffices to find a scheduling vector $\mathbf{f}^{*}$ such that the weight $w(\mathbf{f})$ is maximized in each time-slot, i.e.,
\begin{align}
\mathbf{f}^*=\argmax_{\mathbf{f}}\sum_{u\in\mathcal{N}} Q_u R^{\mathbf{f}}_{u}.
\end{align}

However, it is not trivial to find a MaxWeight schedule with low complexity. We note that for traditional wireless scheduling under 1-hop interference, MaxWeight scheduling boils down to finding a maximum weighted matching in each time-slot, which can be done in $O(N^3)$ where $N$ is the number of nodes. This result does not apply to our setting, however, since the ordering of users sending probing signal matters. A Brute-Force search enumerates all possible permutations of users, leading to a high complexity of $O(N!)$, which is infeasible when $N$ is large. Thus, an interesting question is how to find a MaxWeight schedule in our setting in a more efficient way. To this end, we propose the following algorithm with complexity $O((N/I)^I)$ (polynomial when $I$ is a constant regardless of $N$). In the algorithm, $m_i$ indicates the number of users to be chosen from group $i$, $1\leq i\leq I$, and $\mathbf{m} = (m_1,\cdots, m_I)$ is the user-selection vector. Algorithm~\ref{alg1} will be applied to each time-slot to generate the MaxWeight schedule.

\begin{algorithm} %
	\caption{Search algorithm for MaxWeight Schedule} 
	\label{alg1} 
	\begin{algorithmic}[1] 
		\REQUIRE For all $u\in \mathcal{N}$, group $g(u)$ and queue-length $Q_u$.  
		\ENSURE Scheduling vector $\mathbf{\widehat{\mathbf{f}}}$
		
		\STATE Initialization: User-selection vector $\mathbf{m}=(0,0,\cdots,0)$, \\
		$\widehat{w}=0$, $\widehat{\mathbf{f}}=(0,0,\cdots,0)$.
		\FORALL{$\mathbf{m}$ such that $\sum_{i}m_i \leq K$}
		\STATE Set scheduling vector $\mathbf{f}=(0,0,\cdots,0)$.
		\STATE Set scheduled user set $U=\emptyset$
		\FOR {i=$1, 2,\cdots,I$}
		\STATE  Add $m_i$ users with longest queue-length from group $i$ to $U$.
		\ENDFOR
		\STATE Fill in scheduling vector $\mathbf{f}$ with users in $U$, \\following the Longest Queue-length First order.
		\IF{$w(\mathbf{f})>\widehat{w}$}
		\STATE $\widehat{w}=w(\mathbf{f})$
		\STATE $\widehat{\mathbf{f}}=\mathbf{f}$
		\ENDIF
		\ENDFOR
		\RETURN{$\widehat{\mathbf{f}}$}
		
	\end{algorithmic}
\end{algorithm}
\vspace{-0.2cm}

For a given user-selection vector $\mathbf{m}$, Algorithm~\ref{alg1} picks $m_i$ users from group $i$ with longest queue-length, for all $i=1,2,\cdots, I$. It then generates a candidate scheduling vector $\mathbf{f}$ by filling in users following the Longest Queue-length First (LQF) order. The weight $w(\mathbf{f})$ is evaluated for all possible user-selection vectors $\mathbf{m}$ and its resulting scheduling vector, Algorithm~\ref{alg1} returns the scheduling vector $\widehat{\mathbf{f}}$ that has the maximum weight.

\begin{theorem}
	The schedule $\widehat{\mathbf{f}}$ returned by Algorithm~\ref{alg1} maximizes weight $w(\mathbf{f})$.
\end{theorem}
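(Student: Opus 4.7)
The plan is to rewrite the objective in a pairwise form and then verify that the two greedy choices inside Algorithm~\ref{alg1}---picking the top-$m_i$ queues from each group and arranging the chosen users in LQF order---are each individually optimal. Combined with the outer enumeration over all user-selection vectors $\mathbf{m}$ with $\sum_i m_i \le K$, this forces the algorithm's output to match some optimal schedule. The pairwise rewriting, obtained by swapping the order of summation in the definition of $R_{u_i}^{\mathbf{f}}$ and treating the dummy symbol $0$ as a fictitious zero-queue group, is
\[
w(\mathbf{f}) \;=\; \sum_{1\le i<j\le K} Q_{u_i}\,\mathbf{1}_{\{g(u_i)\neq g(u_j)\}},
\]
and every subsequent calculation exploits it.

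First I would prove an ordering lemma: for any fixed multiset of users filling the $K$ slots, LQF order maximizes $w(\mathbf{f})$. The tool is an adjacent-swap exchange argument. When swapping $u_i$ and $u_{i+1}$ with $g(u_i)\neq g(u_{i+1})$, the contributions from a third position $k<i$ to the pairs $(k,i)$ and $(k,i+1)$ sum to $Q_{u_k}\bigl(\mathbf{1}_{\{g(u_k)\neq g(u_i)\}}+\mathbf{1}_{\{g(u_k)\neq g(u_{i+1})\}}\bigr)$, which is symmetric in $i$ and $i+1$ and hence swap-invariant; the analogous symmetry handles positions $k'>i+1$. The only net change comes from the internal pair $(i,i+1)$, whose contribution moves from $Q_{u_i}$ to $Q_{u_{i+1}}$, so the swap is improving precisely when $Q_{u_{i+1}}>Q_{u_i}$. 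A bubble-sort argument then shows that LQF order is optimal.

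Next I would prove a selection lemma: for any fixed $\mathbf{m}$, using the top-$m_i$ queues from each group is optimal. If an optimum places $v\in\mathcal{G}_i$ at some position while excluding a longer-queued $u\in\mathcal{G}_i$, replace $v$ by $u$ in place. Because $g(u)=g(v)$, every indicator involving that slot is preserved; pairs where $v$ was the later element are unaffected, while pairs where $v$ was the earlier element see their contribution rise from $Q_v$ to $Q_u$. The total weight strictly increases, contradicting optimality.

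To conclude, let $\mathbf{f}^\star$ be any weight-maximizing schedule and let $m_i^\star$ be the number of positions in $\mathbf{f}^\star$ occupied by users from group $i$, so $\sum_i m_i^\star \le K$. By the selection lemma, replacing any non-top-$m_i^\star$ user by its top-queued counterpart does not decrease $w$; by the ordering lemma, rearranging the resulting user set in LQF order does not decrease $w$ either. The final schedule is precisely the candidate that Algorithm~\ref{alg1} evaluates when its loop reaches $\mathbf{m}=\mathbf{m}^\star$, so $w(\widehat{\mathbf{f}})\ge w(\mathbf{f}^\star)$. I expect the swap-invariance bookkeeping in the ordering lemma to be the delicate step; once the pairwise form is in hand, the selection lemma and the final assembly are essentially immediate.
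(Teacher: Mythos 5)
Your proof is correct, and while its skeleton matches the paper's (an ordering lemma saying LQF is optimal for a fixed user set, a selection lemma saying the top-$m_i$ queues per group are optimal for a fixed user-selection vector $\mathbf{m}$, and then the exhaustive sweep over $\mathbf{m}$), the way you establish the two lemmas is genuinely different and arguably cleaner. The paper proves the LQF lemma by induction on the number of scheduled users (fixing the first slot and reducing to a smaller instance) and needs a separate lemma to push idle ``0'' slots to the end; your pairwise rewriting $w(\mathbf{f})=\sum_{i<j}Q_{u_i}\mathbf{1}_{\{g(u_i)\neq g(u_j)\}}$ with dummies as a zero-queue group absorbs the zero-shifting lemma for free and reduces the ordering claim to an adjacent-transposition exchange in which only the internal pair $(i,i+1)$ changes --- a shorter and more transparent argument. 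For the selection lemma the paper substitutes the longer-queued user \emph{and} re-inserts it in LQF position, which forces bookkeeping over the set $\mathcal{Y}$ of users whose rates shift by one slot; your in-place substitution keeps every indicator fixed (same group, same slot) so the weight change is simply $(Q_u-Q_v)\sum_{j>\mathrm{pos}(v)}\mathbf{1}_{\{g(v)\neq g(u_j)\}}\ge 0$, and you defer the re-sorting to the ordering lemma. One small nit: that quantity can be zero (e.g.\ ties, or when every later slot is same-group), so the ``strictly increases, contradicting optimality'' phrasing of your selection lemma is too strong in edge cases; but the weak ``does not decrease'' form is exactly what your final assembly uses, so the conclusion $w(\widehat{\mathbf{f}})\ge w(\mathbf{f}^\star)$ stands (modulo the harmless observation that tie-breaking in Algorithm~\ref{alg1} yields a candidate of identical weight).
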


\begin{proof}
	We divided the proof into two steps. For the first step, we show that the LQF maximizes the weight for a given scheduled user set. Then for the user-selection part, we show that it is sufficient to evaluate all possible user-selection vectors $\mathbf{m}$ and its resulting scheduled user set by adding $m_i$ users with longest queue-length from each group $i$.
	We first present several properties of MaxWeight schedule that will be used later.
	\begin{lemma}\label{lem:shift_zero}
		For any scheduling vector with ``0'' element(s) between two adjacent non-zero elements, the total weight will not decrease by shifting the ``0'' element to the end, i.e., there is no ``idle'' (not learning any user's channel) mini-slot in between two ``busy'' mini-slots.
	\end{lemma}
	\begin{proof}
		Please see APPENDIX \ref{APP:L1}.
	\end{proof}
	
	\begin{corollary}
		The optimal scheduling vector must take the form $\mathbf{f^*}=(u_1, u_2, \cdots, u_{\Omega}, 0, 0, \cdots, 0)$, where $u_1, \cdots, u_\Omega$ are non-zero and $\Omega< \min\{K,N\}$.
	\end{corollary}

	\begin{remark}
	It is also challenging to determine the optimal value of $\Omega$, which depends on group settings as well as instantaneous queue-length. 
	\end{remark}
	
	\begin{lemma}
		For any scheduling vector $\mathbf{f}=(u_1,\cdots, u_\Omega, 0, \cdots, 0)$, the total weight $w(\mathbf{f})$ will not decrease by reordering the users following queue-length descending order (longest queue-length first, LQF).
		\label{lem:LQF}
	\end{lemma}
	
	\begin{proof}
		Please see APPENDIX \ref{APP:L2}.
	\end{proof}
	
	From Lemma \ref{lem:shift_zero} and Lemma \ref{lem:LQF}, we know that for a fixed scheduled user set $\{u_1, u_2, u_3, \cdots, u_{\Omega}\}$ with $Q_{u_1}\geq Q_{u_2}\geq\cdots\geq Q_{u_{\Omega}}$, the optimal schedule $\mathbf{f}^*$ takes the form $(u_1, \cdots, $
	$u_{\Omega}, 0, \cdots, 0)$. From now on, for a given scheduled user set, we only need to focus on the LQF schedule.
	\begin{remark}
		Lemma \ref{lem:LQF} holds only for a given scheduled user set, applying LQF to the set of all users does not guarantee the maximum. Since LQF is a myopic rule, it always gives higher priority to users with longer queue-length regardless of their interference relations. In fact, queue-length and interference relations both play a key role in this problem, and we need to do user-selection to get a good balance between these two factors.
	\end{remark}
	
	For the second step, we will focus on the user-selection part. For a given user-selection vector $\mathbf{m}$, we want to show that choosing $m_i$ users with the longest queue-length from each group $i$ is the best option to maximize weight.
	Denote $\mathcal{P}^{\mathbf{m}}_i$ to be the set of users from group $i$ with $m_i$ longest queue-length, $\mathcal{U}^{\mathbf{f}}_i$ to be the set of users from group $i$ that are selected by schedule $\mathbf{f}$, we have the following lemma.
	
	\begin{lemma}\label{lem:chooselargest}
		Consider a given user-selection vector $\mathbf{m}$, and choose an arbitrary LQF schedule $\mathbf{f}$.
		Pick user $u_s$ with the longest queue-length in the set $\mathcal{U}^{\mathbf{f}}_i/\ \mathcal{P}^{\mathbf{m}}_i$ (if it is not empty), and replace it by user $u_l$ that has the longest queue-length in the set $\mathcal{P}^{\mathbf{m}}_i/\ \mathcal{U}^{\mathbf{f}}_i$. Denote the new LQF schedule as $\mathbf{f}'$, we have $w(\mathbf{f}')\geq w(\mathbf{f})$.
	\end{lemma}
	
	\begin{proof}
		Please see APPENDIX \ref{APP:L3}.
	\end{proof}
	\begin{remark}
		The equality in Lemma \ref{lem:chooselargest} holds if and only if the queue-lengths of $u_s$ and $u_l$ are the same.
	\end{remark}
	\begin{lemma}\label{lem:chooselargest2}
		Given any user-selection vector $\mathbf{m}$, any LQF schedule $\mathbf{f}$ maximizes weight $w(\mathbf{f})$ must pick $m_i$ users with longest queue-length in each group $i$ for any $i=1,2,\cdots,I$.
	\end{lemma}
	
	\begin{proof}
		Please see APPENDIX \ref{APP:L4}.
	\end{proof}
	From Lemma \ref{lem:chooselargest2} we know, given user-selection vector $\mathbf{m}$, the best schedule will always pick $m_i$ users with longest queue-length from each group $i$ for any $i=1,2,\cdots,I$. In addition, the best ordering of these users will be the LQF order. Therefore, given $\mathbf{m}$, the schedule yields maximum weight is determined by:
	(1) For each group $i$, add $m_i$ users with longest queue-length into the scheduled user set $U(\mathbf{m})$.
	(2) Schedule the users from $U(\mathbf{m})$ following the LQF order. Thus, traversing all possible $\mathbf{m}$ will return the MaxWeight schedule. And this proves the optimality of Algorithm \ref{alg1}.
\end{proof}

\section{A Low-complexity Greedy Policy}\label{sec:greedy}
Although Algorithm \ref{alg1} returns throughput optimal policy in polynomial time, the complexity $O((N/I)^I)$ grows very high when the number of groups $I$ is large. 
It is interesting to see whether there is any low-complexity policy that achieves provably good throughput. In this section, we propose a greedy algorithm which incrementally adds users to the schedule and prove that it achieves at least 2/3 of the optimal throughput region. In addition, our proposed greedy policy always achieves a larger throughput region than any scheduling policies under Half-duplex.
\subsection{Greedy Algorithm Description}
\begin{definition}
	(\textit{Marginal Gain}) Given a schedule $\mathbf{f}=(u_1,\cdots, u_\Omega, 0,\cdots, 0)$ and a user $u$ that is a candidate user to be considered in $j^{th}$ mini-slot (when evaluating user $u$, the first $j-1$ scheduled users have already been determined in $\mathbf{f}$), the marginal gain ${\Delta}^{\mathbf{f}, j}_{u}$ is defined to be the weight difference caused by adding user $u$ as the $j^{th}$ element of $\mathbf{f}$, assuming there are no future scheduled users, i.e., ${\Delta}^{\mathbf{f}, j}_{u}=w\left((u_1,\cdots, u_{j-1}, u, 0, \cdots, 0)\right)-w\left((u_1, \cdots, u_{j-1}, 0, \cdots, 0)\right)$. 
\end{definition}

To evaluate the marginal gain of adding user $u$ to the schedule $\mathbf{f}$, we must consider the benefit as well as the cost. The benefit is obvious, we have one more user and it keeps transmitting packets until the end of the current time-slot, i.e., receives a rate of $K-j$. Hence its weight contribution is $Q_u(K-j)$. On the other hand, if we schedule user $u$ in $j^{th}$ mini-slot, it will block the transmission of the previously scheduled users that are from the same group $g(u)$. Thus, the weight loss is $\sum_{i=1}^{j-1}Q_{u_i}\mathbf{1}_{\{g(u_i)=g(u)\}}$. Therefore, we have:
\begin{align}
{\Delta}^{\mathbf{f}, j}_{u}=Q_{u}(K-j)-\sum\limits_{i=1}^{j-1}Q_{u_i}\mathbf{1}_{\{g(u_i)=g(u)\}}.
\end{align}

A positive marginal gain means that by adding a new user, the weight will not be decreased. Marginal gain considers queue-length as well as the group information and is able to discriminate different cases (e.g., long queue-length \& strong interference v.s. short queue-length \& weak interference). Although the marginal gain is not the actual gain of user $u_j$ since we do not know the future scheduled users, it is still a good metric to evaluate the potential gain of adding one candidate user to the current schedule. Moreover, as we will soon see, the Marginal Gain-based Greedy (MGG) Algorithm achieves good throughput performance. 

The MGG Algorithm, inspired by Section \ref{sec:opt}, we first sort users according to their queue-lengths, and then start from the user that has the longest queue-length in the system, the MGG Algorithm iteratively evaluates the user $u$ with next longest queue-length. The MGG Algorithm will add user $u$ if its marginal gain is positive, otherwise skip user $u$ and continue to evaluate the user with the next longest queue-length until $K$ users have been scheduled or all $N$ users are all evaluated. 
\begin{algorithm} %
	\caption{Marginal Gain-based Greedy Algorithm} 
	\label{alg2} 
	\begin{algorithmic}[1] 
		\REQUIRE $\forall$ user $u\in\mathcal{N}$, group $g(u)$ and queue-length $Q_u$.  
		\ENSURE Scheduling vector $\mathbf{f}^G$
		
		\STATE Initialization: $\mathbf{f}^G=(0,\cdots,0)$ 
		\STATE Initialization: $index=1$
		\STATE Sort queue-length, assume $Q_{u_1}\geq Q_{u_2}\geq\cdots Q_{u_N}$
		\FORALL{$i$ from $1$ to $N$}
		\IF{$index\leq K$}
		\IF{${\Delta}^{\mathbf{f}^G, index}_{u_i}\geq 0$}
		\STATE Add user $u_i$ to $\mathbf{f}^G$ as the index$^{th}$ element
		\STATE $index=index+1$
		\ENDIF
		\ENDIF
		\ENDFOR
		\RETURN{$\mathbf{f}^G$}
		
	\end{algorithmic} 
\end{algorithm}

The complexity of Algorithm \ref{alg2} is at most $O(N\log N)$ (comes from the sorting operation), regardless of the value $I$ takes. Compared to Algorithm \ref{alg1}, Algorithm \ref{alg2} uses LQF and marginal gain to efficiently select valuable users. Again, applying LQF only would work poorly, since it only gives higher priority to those users with longer queue-length rather than large marginal gain. In fact, the inter-user interference is very important and should not be ignored.
\subsection{Performance Analysis} 
The MGG Algorithm is simple, however it sacrifices some throughput performance. In this section, we aim to provide a theoretical worst-case lower bound on its throughput performance.
\begin{theorem}
	The Greedy Algorithm \ref{alg2} stabilizes at least 2/3-fraction of the arrival vector on the optimal throughput region. (Achieves 2/3 of the optimal throughput region).
\end{theorem}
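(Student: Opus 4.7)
The plan is to reduce the stability claim to a per-slot weight approximation. By Theorem 1, the MaxWeight schedule $\mathbf{f}^*$ is throughput-optimal, so by the standard imperfect-scheduling Lyapunov drift argument (e.g., Tassiulas and Ephremides, Lin--Shroff), any policy whose schedule $\mathbf{f}^G$ satisfies $w(\mathbf{f}^G) \geq \gamma\, w(\mathbf{f}^*)$ in every slot stabilizes $\gamma\Lambda^*$. Thus it suffices to prove the deterministic per-slot inequality $w(\mathbf{f}^G) \geq \tfrac{2}{3}\, w(\mathbf{f}^*)$, where $\mathbf{f}^G$ is the output of Algorithm 2. The Lyapunov drift $\mathbb{E}[V(Q[t+1])-V(Q[t])\mid Q[t]]$ for $V(Q)=\sum_u Q_u^2$ then becomes negative outside a bounded region whenever $\lambda\in\tfrac{2}{3}\Lambda^*$, giving positive recurrence.

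The main step is the per-slot bound. Using Lemmas 1 and 2, I may assume without loss of generality that $\mathbf{f}^*=(u_1^*,\dots,u_{\Omega^*}^*,0,\dots,0)$ is sorted in LQF order and has no interior zero slots. Let $U^G$ and $U^*$ denote the user sets of the two schedules, and split $w(\mathbf{f}^*)$ into contributions from $U^*\cap U^G$ (common users) and $U^*\setminus U^G$ (users chosen by OPT but skipped by greedy). For every $u^*\in U^*\setminus U^G$ the greedy algorithm must have reached $u^*$ in LQF order and either (i) already filled all $K$ slots, so $Q_{u^*}$ is dominated by every $Q_v$ with $v\in U^G$, or (ii) evaluated a non-positive marginal gain at the candidate position $j_{u^*}$, giving the crucial inequality $Q_{u^*}(K-j_{u^*})\le\sum_{i<j_{u^*}:\,g(u_i)=g(u^*)} Q_{u_i}$, where every $u_i$ on the right has $Q_{u_i}\ge Q_{u^*}$ by the LQF scan order.

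The technical core is a charging scheme that distributes $w(\mathbf{f}^*)$ onto $w(\mathbf{f}^G)$ with total load at most $\tfrac{3}{2}$. For $u^*\in U^*\cap U^G$ I charge $Q_{u^*}R_{u^*}^{\mathbf{f}^*}$ directly to $Q_{u^*}R_{u^*}^{\mathbf{f}^G}$ (one unit of load, absorbing any rate mismatch into the slack below). For $u^*\in U^*\setminus U^G$ I split its contribution into the intra-group rate --- which by the marginal-gain inequality is dominated by the same-group greedy mass already placed before $j_{u^*}$ --- and the inter-group rate, which I route to the greedy users of the conflicting groups that occupy those mini-slots in $\mathbf{f}^G$, using $Q_{u^*}\le Q_v$ to bound each unit. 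Verifying that every unit of $w(\mathbf{f}^G)$ receives at most an extra $\tfrac{1}{2}$-fraction of charge yields $w(\mathbf{f}^*)\le \tfrac{3}{2}\, w(\mathbf{f}^G)$.

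The hard part is the cross-group accounting: a single omitted $u^*$ interferes with only one group but earns rate against all other groups in $\mathbf{f}^*$, so the charge to inter-group greedy mass must be shown to not accumulate across multiple $u^*$'s beyond the $\tfrac{1}{2}$-budget. I expect to handle this by grouping the omitted users by $g(u^*)$ and exploiting that within each group the LQF order forces the skipped users to have smaller queues than the greedy ones they are charged to, while the inter-group charge is absorbed by the $K-|U^G|$ empty positions in $\mathbf{f}^G$ in case (i) and by the intra-group dominance in case (ii). Combining the charging bound with the Lyapunov step yields stability of any $\lambda\in\tfrac{2}{3}\Lambda^*$, which is the theorem.
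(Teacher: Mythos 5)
Your reduction to the per-slot inequality $w(\mathbf{f}^G)\ge \tfrac{2}{3}\,w(\mathbf{f}^*)$ is the same first step the paper takes, and your ingredients (LQF scan order, the skip condition $Q_{u^*}(K-j_{u^*})<\sum_{i<j_{u^*},\,g(u_i)=g(u^*)}Q_{u_i}$, and queue dominance of already-scheduled greedy users) are the right raw material. The gap is that the charging scheme is never actually verified, and the two places where it must be verified are exactly where the constant $2/3$ lives. First, your skip inequality controls $Q_{u^*}(K-j_{u^*})$ at the greedy evaluation slot $j_{u^*}=t_2(u^*)$, but the quantity you must pay for is $Q_{u^*}R^{\mathbf{f}^*}_{u^*}\le Q_{u^*}(K-t_1(u^*))$ at the possibly much earlier position $t_1(u^*)$ in the MaxWeight schedule; similarly, common users can be delayed in the greedy schedule ($t_2(a)\ge t_1(a)$), and that rate mismatch is not harmless ``slack''---it must itself be covered by greedy-only weight. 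The paper handles both by building a mapping from the OPT-only users $\mathcal{B}$ to the greedy-only users $\mathcal{C}$ (via the counting bound $N_{\mathcal{B}}(t)\le N_{\mathcal{C}}(t)$, so each $b_i$ is matched to a $c_i$ with $Q_{c_i}\ge Q_{b_i}$ and $t_2(c_i)\le t_1(b_i)$) and then a prefix-sum ``available rate'' argument showing the matched $\mathcal{C}$ weight covers both kinds of loss; nothing in your sketch plays this role.

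Second, and more importantly, the $\tfrac12$-budget claim is asserted rather than proved, and the sketched justification (intra-group dominance plus empty positions) does not go through as stated: all skipped users of a given group charge against the same same-group greedy prefix, so these charges accumulate, and bounding the accumulated interference loss of a group by half of that group's greedy weight is a genuinely nontrivial statement. In the paper this is Part 2 of the proof, established per group by induction on the number of greedy-scheduled users in the group, using the minimal feasible horizon $K_m=T_m+t_2(u_m)$ together with the inequalities $Q_{u_{m+1}}T_{m+1}\ge\sum_{j\le m}Q_{u_j}$ and $T_{m+1}>T_m$; the tight example (equal queues, geometrically shrinking numbers of users taken per group) shows the total charge really does approach $3/2$, so any valid verification must be exactly this sharp. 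Until you supply (a) an argument covering the position gaps of common and skipped users and (b) a proof of the per-group $\tfrac12$ bound on accumulated interference charge, what you have is a plausible plan in the same spirit as the paper's proof, not a proof of the theorem.
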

\begin{proof}
	From \cite{eryilmaz2005stable}, we know that it suffices to show that $w(\mathbf{f}^G)\geq 2/3w(\mathbf{f}^*)$, where $\mathbf{f}^*$ is the MaxWeight schedule. Consider the users selected by $\mathbf{f}^G$ and $\mathbf{f}^*$. Let $\mathcal{A}$ denote the set of users shared by both schedules, let $\mathcal{B}$
	denote the set of users only scheduled in $\mathbf{f}^*$ and let $\mathcal{C}$ denote the set of users only scheduled in $\mathbf{f}^{G}$. 
	\begin{remark}
		The MaxWeight schedule is not necessarily unique, but these schedules have the same weight. We can choose any of these schedules to be schedule $\mathbf{f}^*$ here. 
	\end{remark} 
	\begin{remark}
		In practice, users in $\mathcal{B}$ could interfere with users in $\mathcal{A}$. Here in this proof, we aim to show a stronger claim which assumes that in the MaxWeight schedule, users from $\mathcal{B}$ do not interfere with users in $\mathcal{A}$ and $\mathcal{B}$ itself.
	\end{remark}
	\begin{definition}
		(\textit{Extra weight})
		Extra weight $\epsilon$ is defined to be the weight loss in the MGG schedule caused by interference from users in $\mathcal{C}$. That is to say, the total weight $w(\mathbf{f}^G)+\epsilon$ is calculated as if there is no interference caused by users in $\mathcal{C}$, adding each user in $\mathcal{C}$ does not block the downlink transmission of all the scheduled users which are from the same group.
	\end{definition}
	
	We divide the proof into two parts, for the first part, we show that $w(\mathbf{f}^G)+\epsilon \geq w(\mathbf{f}^*)$. Then we show that $\epsilon \leq 1/2 w(\mathbf{f}^{G})$. Combining both parts, we know $w(\mathbf{f}^{G})\geq 2/3w(\mathbf{f}^*)$, which concludes the proof.
	
	\textbf{Part 1}
	In this part, we want to show that $w(\mathbf{f}^G)+\epsilon \geq w(\mathbf{f}^*)$, which means the weight of the MGG schedule by ignoring the interference caused by users in $\mathcal{C}$ is greater than the weight of MaxWeight schedule.
	The following lemmas illustrate the relationship between the MGG schedule and MaxWeight schedule, and these results will be used later.
	\begin{lemma}\label{lem:marginal}
		Consider the MaxWeight schedule $\mathbf{f}^*=(u^*_1,\cdots, u^*_{\Omega},0\cdots,0)$. For each $i=1,\cdots, \Omega$, the marginal gain $\Delta^{\mathbf{f}^*, i}_{u^*_i}$ is always non-negative.
	\end{lemma}
	\begin{proof}
		Please see APPENDIX \ref{APP:L5}.
	\end{proof}
	\begin{remark}
		Similar to the MGG schedule generated by Algorithm \ref{alg2}, the MaxWeight schedule adds a user only if the marginal gain is non-negative. The only difference is that the MGG schedule will give higher priority to users with longer queue-length, whereas the MaxWeight schedule may skip some users with long queue lengths and choose other users with large marginal gain.
	\end{remark}
	
	In the MaxWeight schedule, for each user $u\in\mathcal{A}\cup\mathcal{B}$, we use $t_1(u)$ to denote the mini-slot that user $u$ is scheduled. In the MGG schedule, for each user $u\in\mathcal{N}$ we define $t_2(u)$ to be the mini-slot that its marginal gain is evaluated (either schedule $u$ or skip $u$ in $t_2(u)^{th}$ mini-slot), if $u$ has never been considered as a candidate, $t_2(u)=K$. 
	
	
	\begin{lemma}\label{lem:opt_greedy}
		In the MaxWeight schedule, for each $b\in\mathcal{B}$, consider user $d$ which has the longest queue-length among all users in group $g(b)$ that are not scheduled in the MGG schedule. We have: $t_1(b)<t_2(d)$, i.e., $b$ is scheduled earlier in the MaxWeight schedule than the time that $d$ is skipped in the MGG schedule.
	\end{lemma}
	\begin{proof}
		Please see APPENDIX \ref{APP:L6}.
	\end{proof}
	
	Define $N_{\mathcal{B}}(t)$ and $N_{\mathcal{C}}(t)$ to be the number of users in $\mathcal{B}$ and $\mathcal{C}$ scheduled in the MaxWeight and MGG schedule from the first mini-slot to $t^{th}$ mini-slot. We have the following lemma:
	\begin{lemma}\label{lem:BsmallC}
		For each $b\in\mathcal{B}$, which is scheduled in $t_1(b)^{th}$ mini-slot, we have $N_{\mathcal{B}}(t_1(b))\leq N_{\mathcal{C}}(t_1(b))$.
	\end{lemma}
	\begin{proof}
		Please see APPENDIX \ref{APP:L7}.
	\end{proof}
	
	From Lemma \ref{lem:BsmallC}, we can find a \textit{\textbf{mapping}} $h:\mathcal{B}\rightarrow\mathcal{C}$ , $i^{th}$ user $b_i$ in $\mathcal{B}$  corresponds to $i^{th}$ user $c_i$ in $\mathcal{C}$, such that $c_i$ is always scheduled earlier than $b_i$, i.e., $t_1(b_i)\geq t_2(c_i)$. For each user $b_i$, consider user $d_i$ which has the longest queue-length among all users in group $g(b_i)$ that are not scheduled in the MGG schedule. Note that users from group $g(b_i)$ only belongs to $\mathcal{A}$ or $\mathcal{B}$, user $d_i$ has the longest queue-length among all users in $\mathcal{B} \cap \mathcal{G}_{g(b_i)}$, thus $Q_{d_i}\geq Q_{b_i}$. From Lemma \ref{lem:opt_greedy}, we know $t_1(b_i)< t_2(d_i)$ and thus $t_2(c_i)<t_2(d_i)$. Then $Q_{c_i}\geq Q_{d_i}$ due to the LQF order of evaluating users in the MGG policy. Therefore, $Q_{c_i}\geq Q_{b_i}$.
	
	\begin{lemma}\label{lem:more}
		The MGG schedule will schedule more users than the MaxWeight schedule, i.e., $|\mathcal{B}|\leq |\mathcal{C}|$.
	\end{lemma}
	
	\begin{proof}
		Please see APPENDIX \ref{APP:L8}.
	\end{proof}

	Now we are ready to prove the result of part 1. Compare $w(\mathbf{f}^G)+\epsilon$ with $w(\mathbf{f}^*)$, we have two kinds of losses.
	
	\textbf{$\mathcal{A}$ loss:} For each user $a\in\mathcal{A}$, $a$ will be scheduled no earlier in the MGG schedule than that in the MaxWeight schedule, i.e., $t_1(a)\leq t_2(a)$ (corollary of Lemma \ref{lem:BsmallC}). Each user $a$ in the MGG schedule will receive lower or equal rate than that in the MaxWeight schedule. 
	
	\textbf{$\mathcal{B}$ loss:} In the MGG schedule, there is no weight contributed by users in $\mathcal{B}$. 
	
	If the total weight of the users in $\mathcal{C}$ can be used to cover $\mathcal{A}$ and $\mathcal{B}$ losses, then 
	$w(\mathbf{f}^G)+\epsilon\geq w(\mathbf{f}^*)$ holds. First, we consider $\mathcal{A}$ loss: let ${Loss}_{a_i}$ denote the weight loss on user $a_i$.
	\ifreport
	\begin{align}
	{Loss}_{a_i}&=Q_{a_i}(K-t_1(a_i)-|\{a\in\mathcal{A}|a \text{ is scheduled after } a_i \text{ in the MaxWeight schedule}\}|)\nonumber\\
	&\quad - Q_{a_i}(K-t_2(a_i)-|\{a\in\mathcal{A}|a \text{ is scheduled after } a_i \text{ in the MGG schedule}\}|)\nonumber\\
	&=Q_{a_i}(t_2(a_i)-t_1(a_i))\geq 0.
	\end{align}
	\else
	\begin{align}
	{Loss}_{a_i}&=Q_{a_i}(K-t_1(a_i)\nonumber\\
	&\qquad-|\{a\in\mathcal{A}|a \text{ is scheduled after } a_i \text{ in } \mathbf{f}^*\}|)\nonumber\\
	&\quad - Q_{a_i}(K-t_2(a_i)\nonumber\\
	&\qquad-|\{a\in\mathcal{A}|a \text{ is scheduled after } a_i \text{ in }\mathbf{f}^{G}\}|)\nonumber\\
	&=Q_{a_i}(t_2(a_i)-t_1(a_i))\geq 0.
	\end{align}
	\fi
	
	Similarly, we use ${Loss}_{b_i}$ to denote the weight loss on user $b_i$:
	\begin{align}
	{Loss}_{b_i}=Q_{b_i}(K-t_1(b_i))\geq 0.
	\end{align}
	
	The weight difference $w(\mathbf{f}^{G})+\epsilon-w(\mathbf{f}^*)$ is the total weight of $\mathcal{C}$ minus $\mathcal{A}$ loss and $\mathcal{B}$ loss:
	\ifreport
	\begin{align}
	&w(\mathbf{f}^{G})+\epsilon-w(\mathbf{f}^*)\nonumber\\
	&=\sum_{i=1}^{|C|}Q_{c_i}(K-t_2(c_i))-\sum_{i=1}^{|\mathcal{A}|}{Loss}_{a_i}-\sum_{i=1}^{|\mathcal{B}|}{Loss}_{b_i}\nonumber\\
	&=\sum_{i=1}^{|C|}Q_{c_i}(K-t_2(c_i))-\sum_{i=1}^{|\mathcal{A}|}Q_{a_i}(t_2(a_i)-t_1(a_i))-\sum_{i=1}^{|\mathcal{B}|}Q_{b_i}(K-t_1(b_i)).\nonumber\\
	&\stackrel{(d)}\geq\sum_{i=1}^{|\mathcal{B}|}Q_{c_i}(t_1(b_i)-t_2(c_i))+\sum_{i=|\mathcal{B}|+1}^{|\mathcal{C}|}Q_{c_i}(K-t_2(c_i))-\sum_{i=1}^{|\mathcal{A}|}Q_{a_i}(t_2(a_i)-t_1(a_i))\nonumber\\
	&\stackrel{(e)}=\sum_{i=1}^{|\mathcal{C}|}Q_{c_i}(t_1(b_i)-t_2(c_i))-\sum_{i=1}^{|\mathcal{A}|}Q_{a_i}(t_2(a_i)-t_1(a_i)).\label{equ:g+E_opt}
	\end{align}
	\else
	\begin{align}
	&w(\mathbf{f}^{G})+\epsilon-w(\mathbf{f}^*)\nonumber\\
	&=\sum_{i=1}^{|C|}Q_{c_i}(K-t_2(c_i))-\sum_{i=1}^{|\mathcal{A}|}{Loss}_{a_i}-\sum_{i=1}^{|\mathcal{B}|}{Loss}_{b_i}\nonumber\\
	&=\sum_{i=1}^{|C|}Q_{c_i}(K-t_2(c_i))-\sum_{i=1}^{|\mathcal{A}|}Q_{a_i}(t_2(a_i)-t_1(a_i))\nonumber\\
	&\qquad\qquad-\sum_{i=1}^{|\mathcal{B}|}Q_{b_i}(K-t_1(b_i))\nonumber\\
	&\stackrel{(d)}\geq\sum_{i=1}^{|\mathcal{B}|}Q_{c_i}(t_1(b_i)-t_2(c_i))+\sum_{i=|\mathcal{B}|+1}^{|\mathcal{C}|}Q_{c_i}(K-t_2(c_i))\nonumber\\
	&\qquad\qquad-\sum_{i=1}^{|\mathcal{A}|}Q_{a_i}(t_2(a_i)-t_1(a_i))\nonumber\\
	&\stackrel{(e)}=\sum_{i=1}^{|\mathcal{C}|}Q_{c_i}(t_1(b_i)-t_2(c_i))-\sum_{i=1}^{|\mathcal{A}|}Q_{a_i}(t_2(a_i)-t_1(a_i)).\label{equ:g+E_opt}
	\end{align}
	\fi
	where inequality (d) comes from the property of mapping $h$ and equation (e) is derived by setting $t_1(b_i)=K$ for any dummy user $b_i$, $|\mathcal{B}|<i\leq |\mathcal{C}|$. Note that for each $i$, $t_1(b_i)-t_2(c_i)\geq 0$ and $t_2(a_i)-t_1(a_i)\geq 0$.
	
	\begin{lemma}\label{lem:nonnegative}
		The R. H. S. of (\ref{equ:g+E_opt}) is non-negative. 
	\end{lemma}
	\begin{proof}
		Please see APPENDIX \ref{APP_12}.
	\end{proof} 
	
	The result of Lemma \ref{lem:nonnegative} concludes the proof of part 1.
	
	\textbf{Part 2} In this part, we want to show that $\epsilon\leq 1/2w(\mathbf{f}^G)$, i.e., the extra weight is upper bounded by one half of the weight of the MGG schedule. We use $\epsilon_i$ and $w_i(\mathbf{f}^G)$ to denote the extra and actual weight from group $i$. It suffices to show a stronger (per-group) claim: For each group $i$, we have $\epsilon_i\leq 1/2w_i(\mathbf{f}^G)$.
	
	For each group $i$, note that we only need to consider the worst case where all the users from group $i$ are in $\mathcal{C}$. Otherwise, assume there are some users in $\mathcal{A}$, then $w_i(\mathbf{f}^G)$ remains the same while $\epsilon_i$ is smaller.
	
	\begin{lemma}
		Assume in the MGG schedule, we have $m$ users ($u_1, \cdots, u_m$, with queue-length $Q_{u_1}\geq \cdots \geq Q_{u_m}$) from group $i$, define $T_m$ to be the smallest rate of the last scheduled user such that the MGG schedule is feasible (marginal gain is always non-negative). Consider the case $K=K_m\triangleq T_m+t_2(u_m)$, we have $\epsilon^{K_m}_i\leq 1/2 w_i\left(\mathbf{f}^G_{K_m}\right)$, where $\epsilon^{K_m}_i$ and $w_i\left(\mathbf{f}^G_{K_m}\right)$ are extra weight and actual weight of $\mathbf{f}^G$ from group $i$ under $K_m$. \label{lem:induction}
	\end{lemma}
	\begin{proof}
		Please see APPENDIX \ref{APP:L9}.
	\end{proof}
	
	Note that $K_m$ is the smallest value of $K$ such that the MGG schedule is feasible, for any $K\geq K_m$, extra weight $\epsilon_i$ will be the same since it is only related to $u_1, \cdots, u_m$, however, $w_i(\mathbf{f}^G)$ will increase with $K$.
	\begin{align}
	\frac{\epsilon^K_i}{w_i(\mathbf{f}^G_{K})}\leq \frac{\epsilon^{K_m}_i}{w_i(\mathbf{f}^G_{K_m})}\le 1/2.
	\end{align}
	
	Therefore, we know for every feasible MGG schedule, $\epsilon_i/w_i(\mathbf{f}^G)$ is less than one half for any group $i=1,\cdots, I$. We finish the proof of part 2 and now we are able to show $w(\mathbf{f}^G)\geq 2/3w(\mathbf{f}^*)$.
\end{proof}

\begin{proposition}
	The 2/3 worst-case lower bound is tight in terms of weight.
\end{proposition}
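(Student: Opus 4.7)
The plan is to exhibit a small parametric family of instances on which $w(\mathbf{f}^G)/w(\mathbf{f}^*)\to 2/3$, certifying that the constant in the previous theorem cannot be improved. I would take $K=3$ mini-slots and $I=3$ groups, fix a small parameter $\delta>0$, and use four users: $u_1,u_2\in\mathcal{G}_1$ with $Q_{u_1}=Q_{u_2}=1+2\delta$; $u_3\in\mathcal{G}_2$ with $Q_{u_3}=1+\delta$; and $u_4\in\mathcal{G}_3$ with $Q_{u_4}=1$.

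First I would trace Algorithm~\ref{alg2} on this instance. The LQF sort is $(u_1,u_2,u_3,u_4)$, with $u_1,u_2$ ordered arbitrarily by the tie-break. User $u_1$ is scheduled in mini-slot 1 with strictly positive marginal gain; at mini-slot 2 the marginal gain of $u_2$ is $Q_{u_2}(K-2)-Q_{u_1}=0$, which is non-negative, so $u_2$ is also scheduled (this tie-case is what drives the sub-optimality); at mini-slot 3, $u_3$ has marginal gain $Q_{u_3}(K-3)=0$ and is likewise scheduled; and $u_4$ is never considered because the three mini-slots are already full. Summing per-user rates under $\mathbf{f}^G=(u_1,u_2,u_3)$ yields
\begin{equation*}
w(\mathbf{f}^G)=(1+2\delta)\cdot 1+(1+2\delta)\cdot 1+(1+\delta)\cdot 0=2+4\delta.
\end{equation*}

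Next I would compute $w(\mathbf{f}^*)$ by enumerating the user-selection vectors $\mathbf{m}=(m_1,m_2,m_3)$ with $\sum_i m_i\le 3$, as prescribed by Algorithm~\ref{alg1} (which Theorem~2 guarantees delivers the MaxWeight schedule). The optimiser is $\mathbf{m}=(1,1,1)$, giving $\mathbf{f}^*=(u_1,u_3,u_4)$ in LQF order with rates $(2,1,0)$ and weight $(1+2\delta)\cdot 2+(1+\delta)\cdot 1+1\cdot 0=3+5\delta$; each of the few remaining candidate vectors (e.g.\ $(2,1,0)$, $(1,0,1)$, $(0,1,1)$) is easily checked to yield at most this value. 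As a consistency check, on this instance $\mathcal{A}=\{u_1,u_3\}$, $\mathcal{B}=\{u_4\}$, $\mathcal{C}=\{u_2\}$, and the extra weight is $\epsilon=Q_{u_1}=1+2\delta=\tfrac{1}{2}w(\mathbf{f}^G)$, so the Part~2 inequality holds with equality and Part~1 with only an $O(\delta)$ slack, i.e.\ both bounds in the previous analysis are simultaneously near-tight.

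The conclusion is then immediate: $w(\mathbf{f}^G)/w(\mathbf{f}^*)=(2+4\delta)/(3+5\delta)\to 2/3$ as $\delta\downarrow 0$. The main potential pitfall in carrying out this plan is overlooking a better schedule in the MaxWeight enumeration, but with only three groups, three mini-slots and four users the list of candidate $\mathbf{m}$ is very short, so the verification is routine.
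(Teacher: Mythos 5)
Your construction is correct, and it is a genuinely different (and smaller) witness than the one in the paper. I verified your trace: Algorithm~\ref{alg2} schedules $u_1$ (gain $2(1+2\delta)>0$), then accepts $u_2$ because its marginal gain $(1+2\delta)(K-2)-Q_{u_1}=0$ satisfies the algorithm's $\geq 0$ rule, then accepts $u_3$ with gain $0$, giving $w(\mathbf{f}^G)=2+4\delta$; and enumerating the user-selection vectors with $\sum_i m_i\le 3$ confirms $w(\mathbf{f}^*)=3+5\delta$, attained by $(u_1,u_3,u_4)$ (or $(u_1,u_3,0)$), so the ratio $(2+4\delta)/(3+5\delta)\to 2/3$. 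The paper instead takes an asymptotic family: $K=2^r$, $K-1$ groups, all queue-lengths equal, each group populated; there MaxWeight serves one user per group for weight $K(K-1)/2$, while the greedy cascades through the groups, serving $K/2,K/4,\dots,1$ users for weight $(K^2-1)/3$, and the ratio tends to $2/3$ as $K\to\infty$. The trade-offs: your example is finite and minimal ($K=3$, $I=3$, four users), uses $\delta$-perturbations to pin down the LQF order, and comes with a nice consistency check showing the Part~2 bound $\epsilon\le \tfrac{1}{2}w(\mathbf{f}^G)$ is met with equality; but at $K=3$ it necessarily leans on the algorithm accepting a marginal gain of exactly zero (with a strict-positivity rule this particular instance would no longer be bad), whereas the paper's family exploits adversarial tie-breaking of the LQF sort among equal queues and additionally exhibits the geometric ``half-the-remaining-horizon'' structure that explains where the $2/3$ comes from and that the looseness persists for arbitrarily large $K$ and $I$. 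Either construction suffices to prove the proposition as stated.
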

\begin{proof}
	Assume $K=2^{r}$ for some positive integer $r>0$. All the users have the same queue-length, and there are $K-1$ groups where each group has enough users. Then the MaxWeight schedule will serve $K-1$ users, one for each group, which gives a total rate of $K(K-1)/2$, while the MGG Algorithm serves $K/2$ users from group 1, $K/4$ users from group 2, $\cdots$ and 1 user from group $r$, which gives a total rate of $(K^2-1)/3$. As $K\to\infty$, the efficiency ratio becomes arbitrarily close to $2/3$.
\end{proof}

\begin{theorem}\label{thm:th3}
	The throughput region of the proposed MGG policy is no smaller than the optimal throughput region under Half-duplex.
\end{theorem}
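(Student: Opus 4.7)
The plan is to reduce the claim to a per-slot weight comparison and then establish that comparison by a short induction. By the same Lyapunov-drift argument that underlies Theorem~\ref{thm:M} and the proof of the $2/3$-bound via \cite{eryilmaz2005stable}, it suffices to show that at every queue state $\mathbf{Q}$ the MGG schedule's weight dominates the weight of an optimal Half-duplex schedule, that is, $w(\mathbf{f}^G)\ge w(\mathbf{f}^{HD*})$.

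I would first write out $w(\mathbf{f}^{HD*})$ in closed form. Since Half-duplex disallows concurrent probing and transmission, a schedule that probes $m\le K$ users serves each of them for exactly $K-m$ mini-slots, so it is optimal to probe the $m$ users with the largest queue-lengths. Relabelling the users so that $Q_{u_1}\ge Q_{u_2}\ge\cdots\ge Q_{u_N}$, this yields
\[
w(\mathbf{f}^{HD*})=\max_{0\le m\le K}(K-m)\sum_{i=1}^{m}Q_{u_i}.
\]

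The crux is the following invariant on the execution of Algorithm~\ref{alg2}. Let $W_m$ denote the weight of the partial MGG schedule immediately after user $u_m$ has been processed (either added or skipped). I would prove by induction on $m$ that $W_m \ge (K-m)\sum_{i=1}^{m}Q_{u_i}$ for every $m\in\{0,1,\ldots,K\}$. Granted this invariant, since every MGG step has non-negative marginal gain, $W_m$ is monotone non-decreasing; hence $w(\mathbf{f}^G)=W_N \ge W_{m^*}\ge w(\mathbf{f}^{HD*})$, where $m^*$ attains the maximum above.

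For the induction, let $p_m$ be the position index at which MGG considers $u_m$ and let $\sigma$ be the sum of queue-lengths of the same-group predecessors already present in the partial schedule; by construction of Algorithm~\ref{alg2} one has $p_m\le m$ and $\sigma\le\sum_{i=1}^{m-1}Q_{u_i}$. In the skip case, the skipping condition $Q_{u_m}(K-p_m)<\sigma$ together with $K-p_m\ge K-m$ yields $\sum_{i=1}^{m-1}Q_{u_i}>(K-m)Q_{u_m}$, which combined with the inductive hypothesis $W_{m-1}\ge(K-m+1)\sum_{i=1}^{m-1}Q_{u_i}$ delivers the desired bound on $W_m=W_{m-1}$. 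In the add case, $W_m-W_{m-1}=Q_{u_m}(K-p_m)-\sigma$, and substituting the inductive hypothesis reduces the target inequality to $\bigl(\sum_{i=1}^{m-1}Q_{u_i}-\sigma\bigr)+Q_{u_m}(m-p_m)\ge 0$, which is immediate since both parenthesised terms are non-negative. The only slightly delicate point is keeping the bookkeeping of $p_m$ and $\sigma$ straight; once the bounds $p_m\le m$ and $\sigma\le\sum_{i=1}^{m-1}Q_{u_i}$ are extracted from Algorithm~\ref{alg2}, the remainder is routine algebra.
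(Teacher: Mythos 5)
Your proposal is correct, and its outer structure matches the paper's: reduce the theorem to the per-state weight dominance $w(\mathbf{f}^G)\ge w^*_{HD}$ and then conclude stability by the standard drift argument (the paper makes this reduction explicit as Lemma~\ref{lem:lyapunov}, which you invoke rather than re-derive; that is fine since it is the routine Foster--Lyapunov computation). Where you genuinely diverge is in proving the dominance itself. The paper's Lemma~\ref{lem:weight_domin} characterizes the Half-duplex optimum through the stopping index $m^*=\min\{m: Q_{u_{m+1}}(K-m-1)<\sum_{i=1}^m Q_{u_i}\}$, shows by contradiction that the MGG schedule's longest-queue prefix length $m'$ satisfies $m'\ge m^*$, and then chains $w(\mathbf{f}^G)\ge w(\tilde{\mathbf{f}'})\ge w(\tilde{\mathbf{f}})\ge w_{HD}(\tilde{\mathbf{f}})=w^*_{HD}$ through intermediate schedules. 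You instead prove a running invariant over the LQF processing order, $W_m\ge (K-m)\sum_{i=1}^m Q_{u_i}$, whose add/skip cases I checked and which are handled correctly (the bounds $p_m\le m$ and $\sigma\le\sum_{i=1}^{m-1}Q_{u_i}$ are exactly what is needed, and monotonicity of $W_m$ follows from the nonnegative-marginal-gain acceptance rule). Your invariant dominates the Half-duplex weight for \emph{every} probing-set size $m$ simultaneously, so you never need to identify the Half-duplex optimizer or argue any unimodality/stopping-rule property, which makes the argument somewhat cleaner and more self-contained; the paper's route, in exchange, yields the structural by-product $m'\ge m^*$ (MGG schedules at least as many of the longest queues as the Half-duplex optimum), which has independent interpretive value. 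One cosmetic point: the maximum defining $w(\mathbf{f}^{HD*})$ should range over $0\le m\le\min\{K,N\}$, and correspondingly your invariant range should be $\{0,\ldots,\min\{K,N\}\}$; this changes nothing in the argument.
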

\begin{proof}
	We first prove the following lemma, which shows that the weight of MGG policy dominates the weight of any Half-duplex policy. 
	\begin{lemma}\label{lem:weight_domin}
		The weight of the MGG policy is no smaller than the maximum weight under Half-duplex, i.e., $w(\mathbf{f}^G)\geq w^*_{HD}$, where $w_{HD}(\cdot)$ is the total weight calculated under Half-duplex.
	\end{lemma}
	\begin{proof}
		Please see APPENDIX \ref{APP_L10}.
	\end{proof}
	
	Now we need to show that the MGG policy stabilizes any arrival vector $\boldsymbol{\lambda}=(\lambda_1, \cdots, \lambda_n)$ within the optimal throughput region under Half-duplex $\Lambda^*_{HD}$. The following lemma can be used to prove this claim. 
	\begin{lemma}\label{lem:lyapunov}
		Consider the capacity region $\Lambda_{HD}$ under Half-duplex, $w^*_{HD}$ is the maximum weight among all feasible scheduling policies under Half-duplex. If there exists a Full-duplex scheduling policy $\mathbf{f}^G$, such that $w(\mathbf{f}^G)\geq w^*_{HD}(\mathbf{f})$ for any queue-length vector, then policy $\mathbf{f}^G$ can stabilize any arrival vector within $\Lambda^*_{HD}$.
	\end{lemma}
	\begin{proof}
		Please see APPENDIX \ref{APP_11}.
	\end{proof}
	
Applying Lemma \ref{lem:weight_domin} and \ref{lem:lyapunov}, Theorem \ref{thm:th3} follows.
\end{proof}
\begin{remark}
Other promising low-complexity algorithms, such as greedily select users with the largest marginal gain or simply adopt certain amount of users from each group cannot work well either in the comparison with traditional Half-duplex schemes or under heterogeneous traffic arrivals.
\end{remark}

\section{Capacity Gain of Full-duplex over Half-duplex}\label{sec:discuss}
\ifreport
In this section, we will discuss the capacity gain of Full-duplex over Half-duplex. Let $\Lambda_{FD}$ and $\Lambda_{HD}$ denote the capacity region under Full-duplex and Half-duplex mode, respectively. To simplify, we only evaluate the capacity magnitude $\nu_{FD}$ and $\nu_{HD}$ along the $(1,\cdots, 1)$ vector (e.g., $(\nu_{FD},\cdots, \nu_{FD})$ is the largest arrival vector such that all users have the same arrival rate and the queuing system can be stabilized under Full-duplex mode). In addition, we assume all groups have the same size, i.e., $N_1=\cdots=N_I=N/I$.
\else
In this section, we will discuss the capacity gain of Full-duplex over half-duplex. Let $\Lambda_{FD}$ and $\Lambda_{HD}$ denote the capacity region under Full-duplex and half-duplex mode, respectively. To simplify, we only evaluate the capacity magnitude $\nu_{FD}$ and $\nu_{HD}$ along the $(1,\cdots, 1)$ vector (e.g., $(\nu_{FD},\cdots, \nu_{FD})$ is the largest arrival vector such that all users have the same arrival rate and the queuing system can be stabilized under Full-duplex mode). In addition, we assume all groups have the same size, i.e., $N_1=\cdots=N_I=N/I$.
\fi

For half-duplex, if the sum-rate is upper bounded by $B_{HD}$, then the lowest service rate is upper bounded by $B_{HD}/N$. According to the basic queuing theory, $\nu_{HD}\le B_{HD}/N$.
The sum-rate is calculated by:
\begin{align}
\sum_{i=1}^NR_i^{HD}=\left(K-\sum_{j=1}^Im_j\right)\sum_{j=1}^Im_j.\label{equ:HD:sum_rate}
\end{align}
where $m_j$ is the $j^{th}$ element in the user-selection vector. If $N\ge K/2$, the maximum of the sum-rate is achieved by taking $\sum_{j=1}^I m_j=K/2$, thus the upper bound $B_{HD}=\frac{K^2}{4}$. Otherwise, if $K$ is larger, the maximum is achieved by scheduling all users in the system, $B_{HD}=\left(K-N\right)N$. To sum up, 
\begin{align}
\nu_{HD}=
\begin{cases}
\frac{K^2}{4N}, &N\ge K/2\cr K-N, &\text{otherwise.}\end{cases}
\end{align}

Next, we will look at the Full-duplex case, consider a randomized policy $P$ which uses random schedules from time-slot to time-slot, denote its sum-rate as $B_{FD}$. Since the optimal throughput region is the union of the throuutghput regions of all possible scheduling policies, we have $\nu_{FD}\ge B_{FD}/N$. The sum-rate under $\mathbf{f}$ is calculated by:
\begin{align}
\sum_{i=1}^N R_i^{\mathbf{f}}=\sum_{j=1}^I\sum_{k<j}m_jm_k+\left(K-\sum_{j=1}^Im_j\right)\sum_{j=1}^Im_j.\label{equ:sum_rate}
\end{align}
where $m_j$ is the $j^{th}$ element in the user-selection vector $\mathbf{m}$.

The first term of the R. H. S. of (\ref{equ:sum_rate}) calculates the total rate from the first mini-slot to $\sum_{j=1}^Im_j^{th}$ mini-slot, we only need to count the number of user pairs $(u_i,u_j)$ such that $g(u_i)\neq g(u_j)$ and $u_i$ is scheduled before $u_j$. After $\sum_{j=1}^Im_j^{th}$ mini-slot, all scheduled user will have $K-\sum_{j=1}^Im_j$ additional rate. The total rate from the remaining mini-slot is just $\left(K-\sum_{j=1}^Im_j\right)\sum_{j=1}^Im_j$. To get the upper bound of the sum-rate, we need to solve the following maximization problem.
\begin{equation*}
\begin{aligned}
& \underset{m}{\text{maximize}}
& & \sum_{j=1}^I\sum_{k<j}m_jm_k+\left(K-\sum_{j=1}^Im_j\right)\sum_{j=1}^Im_j \\
& \text{subject to}
& & m_i\le N/I, m_i\in\mathbb{N}, \\
& & &\text{ for all } i=1,2, \cdots, I.
\end{aligned}
\end{equation*} 
If $N/I\ge \frac{K}{I+1}$ for all $i=1,2,\cdots, I$, then the maximum is achieved by taking $m_i=\frac{K}{I+1}$ for all $i=1,2,\cdots, I$. In this case, $B_{FD}=\frac{IK^2}{2(I+1)}$. Otherwise, the maximum is achieved by taking $m_i=N/I$ for all $i$. $B_{FD}=\frac{N(2IK-N-IN)}{2I}$. In a word,
\begin{align}
\nu_{FD}=
\begin{cases}
\frac{IK^2}{2N(I+1)}, &N\ge \frac{IK}{I+1}\cr \frac{2IK-N-IN}{2I}, &\text{otherwise.}\end{cases}.
\end{align}
Define Full-duplex gain $G_{FD}=\frac{\nu_{FD}}{\nu_{HD}}$, $\alpha=K/N$. We have:
\begin{align}
G_{FD}=
\begin{cases}
\frac{2I}{I+1}, &\alpha\le \frac{I+1}{I}
\cr \frac{2(2I\alpha-1-I)}{I\alpha^2}, &\frac{I+1}{I}\le \alpha\le 2
\cr 1+\frac{I-1}{2I(\alpha-1)} ,&\alpha\ge 2\end{cases}.\label{equ:case:gain}
\end{align}
Fix group number $I=10$, Fig. \ref{fig:gain_fixI} shows the Full-duplex gain $G_{FD}$ for different $\alpha$.
\begin{figure}[htbp]
	\centering
	\includegraphics[width=3.2in]{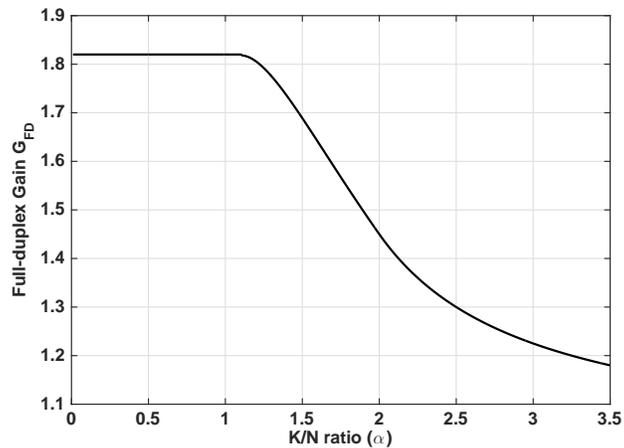}
	\vspace{-0.5cm}
	\caption{Full-duplex gain versus $\alpha$, when the group number $I=10$.}\label{fig:gain_fixI}
\end{figure}
As we can see in the figure, if $\alpha$ is smaller than 1.1, Full-duplex gain $G_{FD}$ remains larger than 1.8. In this regime, the number of users $N$ is larger than (or comparable to) $K$, which means the learning phase takes as long as nearly $K/2$ mini-slots. Note that the Full-duplex gain comes from concurrent channel probing and data transmission, the longer learning phase takes, the larger $G_{FD}$ will be observed. On the other hand, when $\alpha$ becomes larger, $G_{FD}$ decreases from 1.82 to 1.18. This is because the learning phase is negligible compared to $K$, thus we don't have much gain compared to the traditional schemes. In general, when $I$ becomes larger, the upper bound of the $G_{FD}$ becomes closer to 2, which matches the expected potential of the Full-duplex gain.

Fix $\alpha$ to be 1.0, 1.5 and 3, Fig. \ref{fig:gain_fixK/N} shows how does the Full-duplex gain $G_{FD}$ change with different group number $I$.
\begin{figure}[htbp]
	\centering
	\includegraphics[width=3.2in]{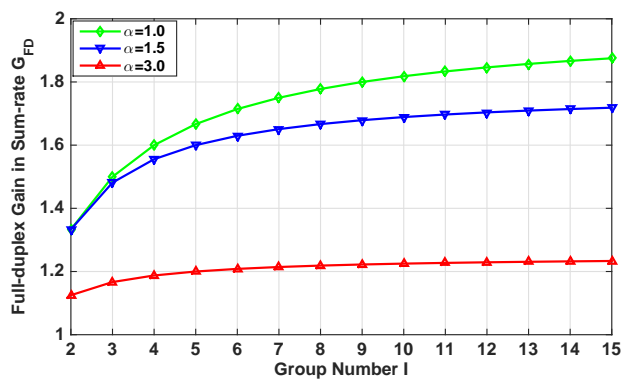}
	\caption{Full-duplex gain versus group number $I$, when the $K/N$ ratio ($\alpha$) is fixed.}
	\vspace{-0.0cm}\label{fig:gain_fixK/N}
\end{figure}
\ifreport
\else
\begin{figure*}[t]
	\centering \subfigure[Regime 1]{
		\includegraphics[width=2.16in]{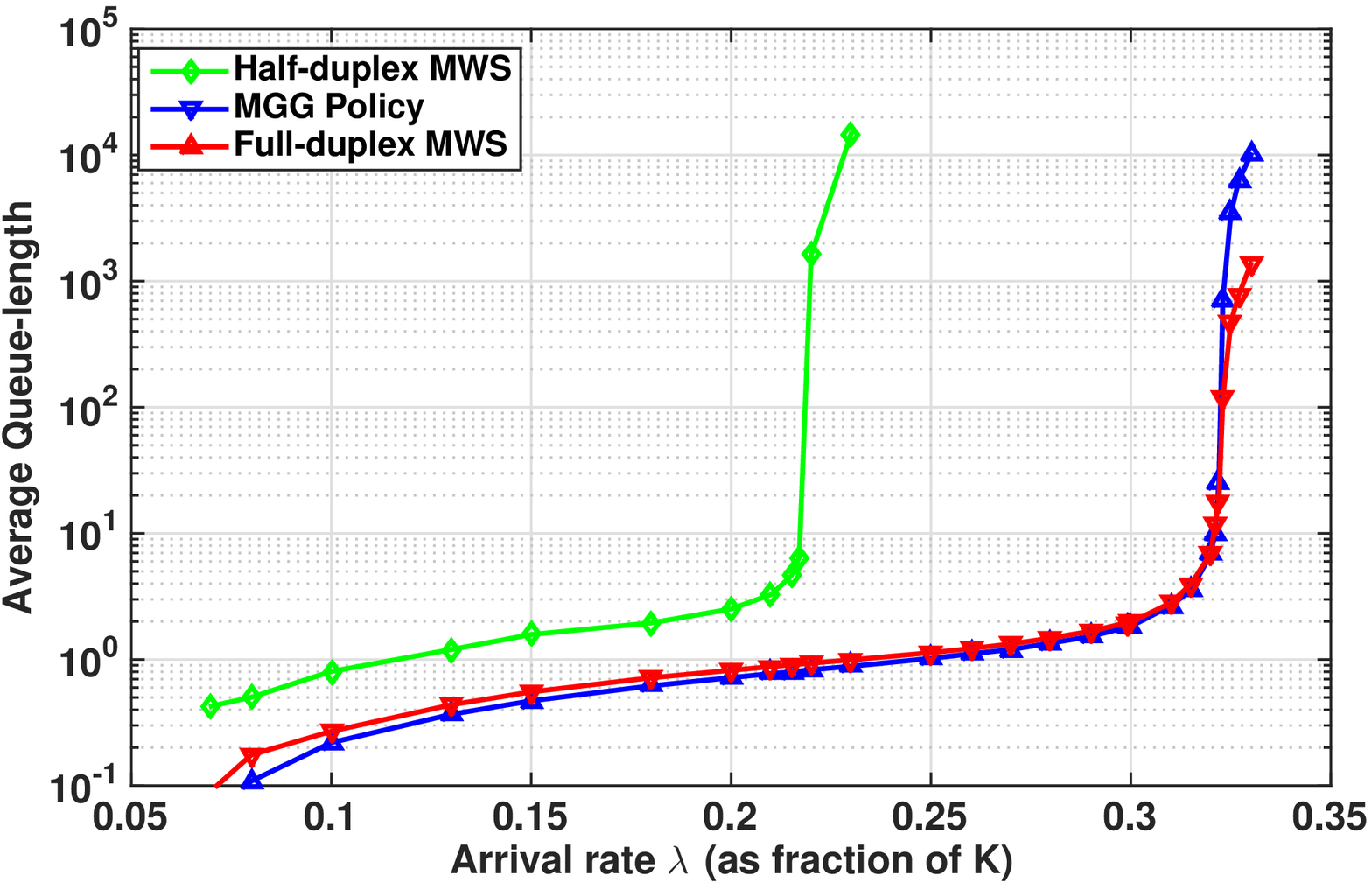}}
	\hspace{0.2in}
	\subfigure[Regime 2]{
		ech		\includegraphics[width=2.06in]{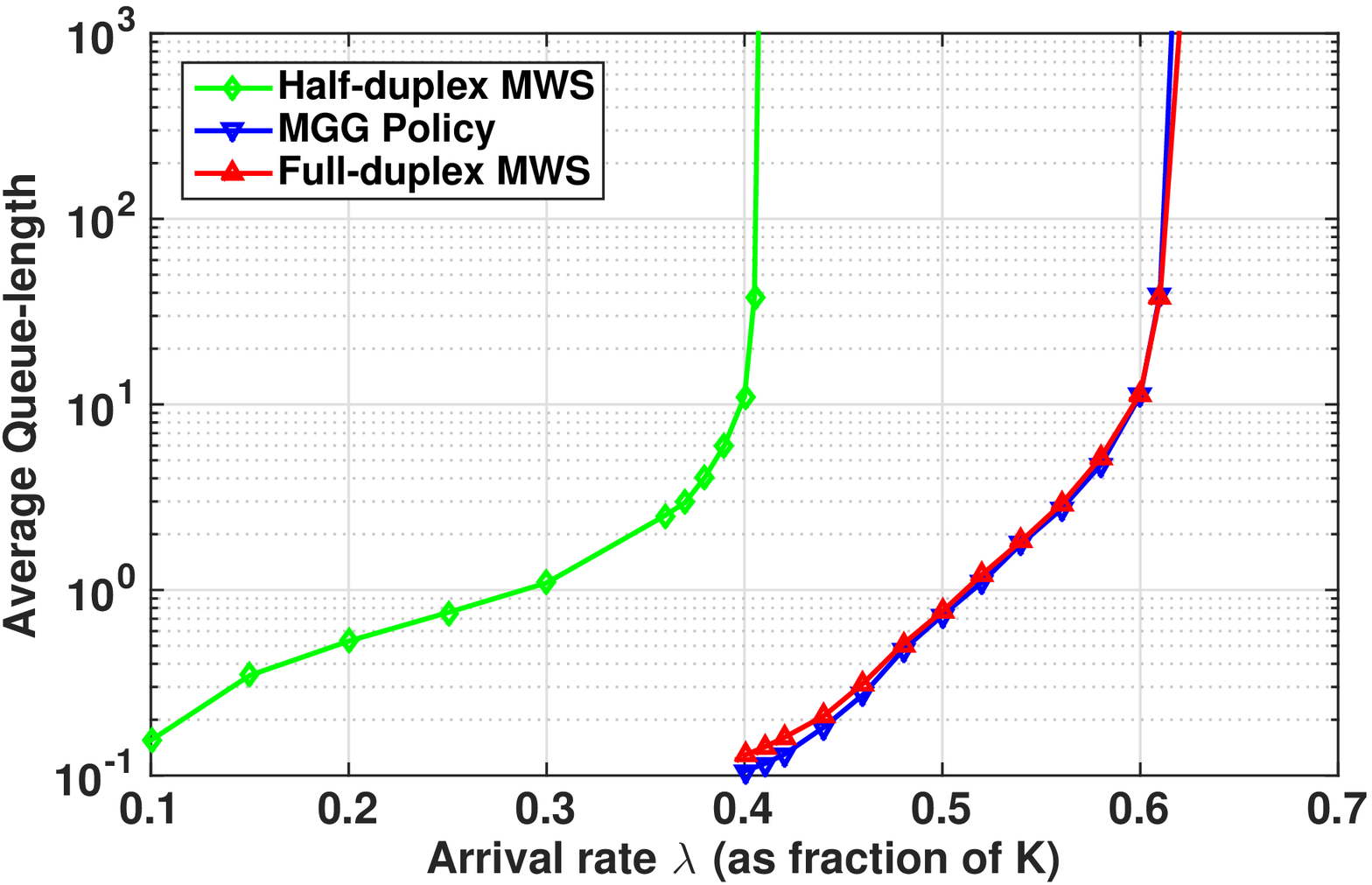}}
	\hspace{0.2in}
	\subfigure[Regime 3]{
		\includegraphics[width=1.84in]{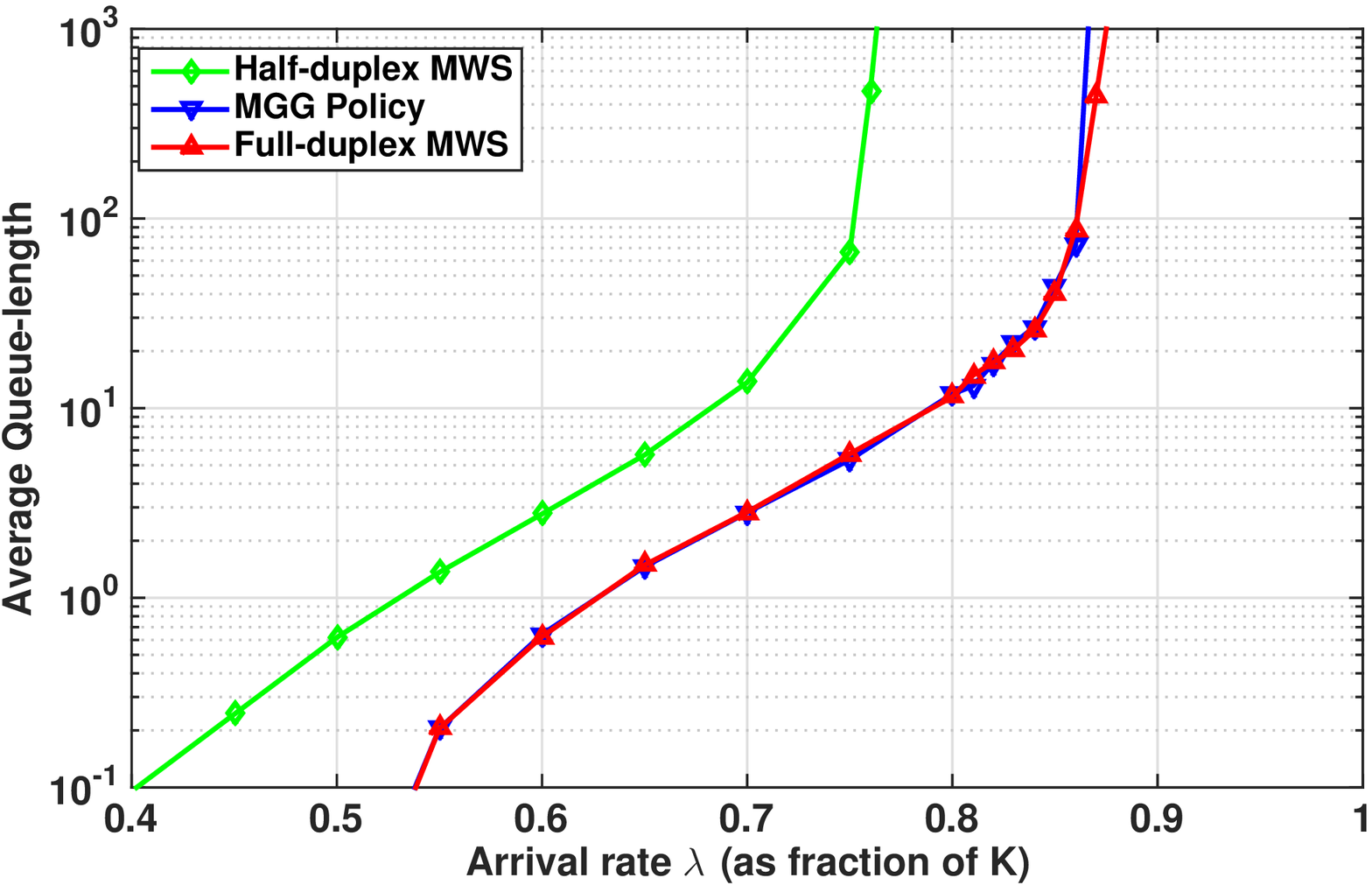}}
	\caption{Average queue-length for different arrival rate.}
	\label{fig:sim_queue}
\end{figure*}
\fi
From Fig. \ref{fig:gain_fixK/N}, we can observe that the Full-duplex gain $G_{FD}$ keeps increasing as $I$ becomes larger. The scheduler has more flexibility when given more groups, thus a larger Full-duplex gain should be expected. Moreover, in many user regime (green and blue curve), $G_{FD}$ has improved by 40\% and 30\% when $I$ increases from 2 to 15. However, $G_{FD}$ does not improve much in small user regime (red curve). The learning phase only takes a small fraction of time, thus $G_{FD}$ is always a little larger than 1.1, regardless of what value $I$ takes.

\section{Numerical Results}\label{sec:sim}
In this section, we use simulations to evaluate our proposed greedy policy and compare
its performance with traditional Half-duplex and Full-duplex MaxWeght Scheduling (MWS) schemes.

\subsection{Simulation Settings}
We consider the downlink system of a single-cell Full-duplex MIMO system. There are $N$ users in this system and each user is equipped with only one antenna. The BS is assumed to have sufficiently large number of antennas. Suppose all users are divided into $I$ user groups such that users from different group does not interfere with each other. Unlike the assumption we make in Section \ref{sec:discuss}, each user group now could have different group size. In addition, we assume that each time-slot has 15 mini-slots, i.e., $K=15$. We consider i.i.d. arrival, i.e., 
\begin{align}
A_u[t]=\begin{cases}
K, &\text{w.p. } \lambda
\cr 0, &\text{otherwise}\end{cases}\nonumber
\end{align}
where $\lambda$ is the scaled arrival rate of queue $u$, $u\in\mathcal{N}$.

\subsection{Performance of Greedy Policy under Different Regimes}
Fix group number $I=4$, we then evaluate the performance of the proposed greedy policy in three regimes which represent three conditions of (\ref{equ:case:gain}). Define regime 1 as the many-user regime such that $\alpha\leq 1.25$. In regime 1, we take $N_1=8, N_2=5, N_3=6, N_4=1$, with sum $N=20$ and $\alpha=0.75$.
Regime 2 denotes the moderate regime, where $N$ is comparable with $K$ such that $1.25\le \alpha\le 2$. In regime 2, $N_1=3, N_2=2, N_3=2, N_4=3$, with sum $N=10$ and $\alpha=1.5$. Regime 3 represents the small-user regime such that $\alpha\ge 2$. 
In regime 3, we take $N_1=1, N_2=1, N_3=1, N_4=1$, with sum $N=4$ and $\alpha=3.75$. For all these three scenarios, we plot the average queue-length under different arrival rate $\lambda$ in Fig. \ref{fig:sim_queue}.

\ifreport
\begin{figure*}[t]
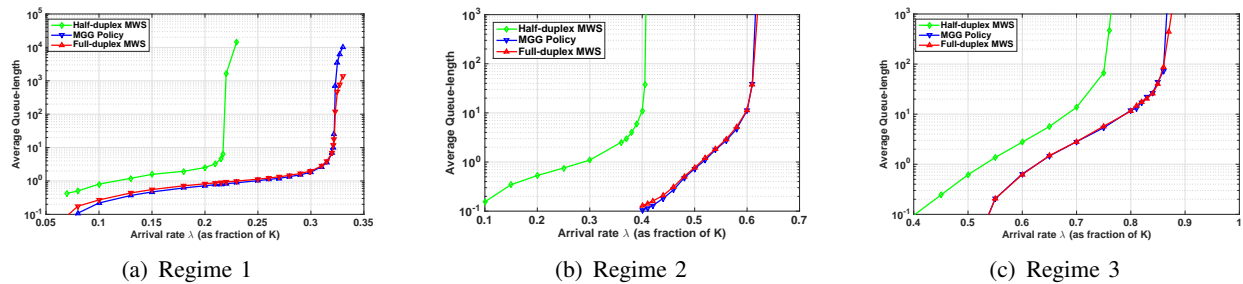

	\centering \subfigure[Regime 1]{
		\includegraphics[width=1.9in]{sim1.eps}}
	\hspace{0.2in}
	\subfigure[Regime 2]{
		\includegraphics[width=1.9in]{sim2.eps}}
	\hspace{0.2in}
	\subfigure[Regime 3]{
		\includegraphics[width=1.9in]{sim3.eps}}
	\vspace{-0.3cm}
	\caption{Average queue-length under different arrival rate.}
	\vspace{-0.5cm}
	\label{fig:sim_queue}
\end{figure*}
\else
\fi
In all three regimes, the performance of the MGG policy is very close to the Full-duplex MaxWeight policy. Thus, the throughput performance of the MGG policy is also very close to optimal. 
The Full-duplex gain is larger if $\alpha$ is small, meaning $K$ is smaller compared to $N$. In this case, the control overhead of sending probing signals becomes the system bottleneck. Introducing Full-duplex reduces the control overhead and thus the throughput is improved substantially. As $\alpha$ becomes larger, the control overhead no longer limits the throughput, since it only takes a small fraction of time to send probing signals. 
As a result, Full-duplex gain decreases from 1.5 to 1.13 from as $\alpha$ increases from 0.75 to 3.75.

\subsection{Performance of Greedy Policy under Random Group Assignments}
Given $N$ users, the way of assigning users to different groups affects the Full-duplex gain. In this section, we would like to evaluate throughput performance under random group assignments. Fix group number $I=4$, number of users $N=10$ and $K=15$. Assume that each user has equal probability to be assigned to each group, the following figure shows the empirical CDF of the Full-duplex gain for 10000 samples of random group assignments.
\begin{figure}[htbp]
	\centering
	\includegraphics[width=3.2in]{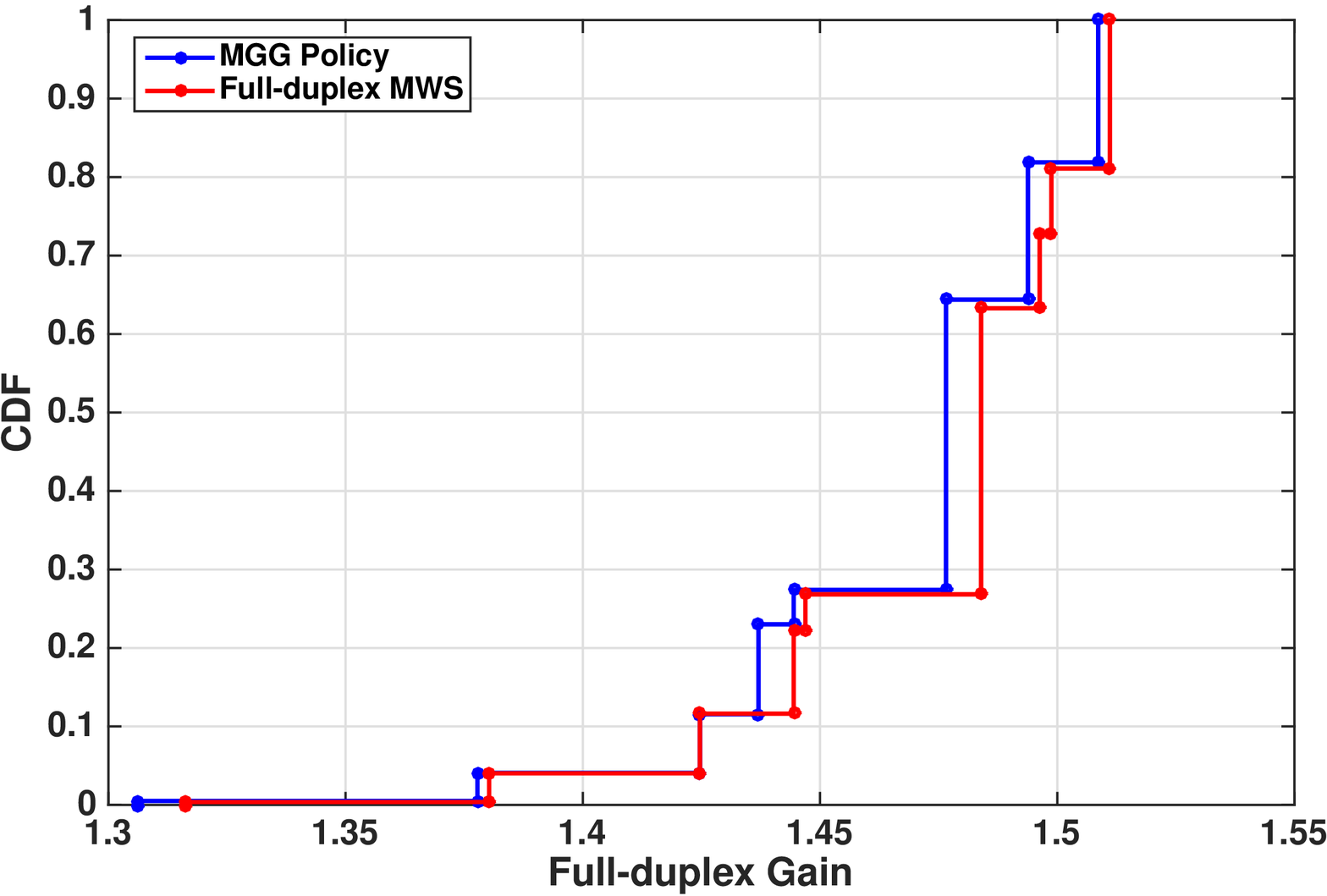}
	\vspace{-0.3cm}
	\caption{The empirical CDF for Full-duplex gain compared to Half-duplex throughput optimal policy}
	\vspace{-0.1cm}\label{fig:cdf}
\end{figure}

From Fig. \ref{fig:cdf}, we can observe that the Full-duplex gain of the MGG policy and MaxWeight policy have similar distributions. Although in theory there may exist scenarios in which the MGG policy is suboptimal, in typical scenarios it achieves near-optimal throughput performance. The median Full-duplex gain under the MaxWeight scheduling and the MGG policy is around 1.48. Although the lowest Full-duplex gain is around 1.3, in typical scenarios (90\% of all samples), the Full-duplex gain is larger than 1.44 (44\% improvement).

\section{Conclusion}\label{sec:conclusion}
In this paper, we develop a throughput optimal scheduling policy for concurrent channel probing and data transmission scheme. To further reduce the complexity when there are a large number of groups, we propose a greedy policy with complexity $O(N\log N)$ that not only achieves at least 2/3 of the optimal throughput region but also outperforms any feasible Half-duplex solutions. Furthermore, we derive the Full-duplex gain under different system parameters. Finally, we use numerical simulations to validate our theoretical results. 

\ifreport

{\scriptsize
\bibliographystyle{ieeetr}
\bibliography{reference}  
}
\else
{\scriptsize
\bibliographystyle{ieeetr}
\bibliography{reference}  

\begin{thebibliography}{10}

\bibitem{cisco}
``White paper: Cisco {VNI} forecast and methodology, 2015-2020.''
  \url{{http://www.cisco.com}}.

\bibitem{tse2005fundamentals}
D.~Tse and P.~Viswanath, {\em Fundamentals of wireless communication}.
\newblock Cambridge university press, 2005.

\bibitem{liu2016understanding}
J.~Liu, A.~Eryilmaz, N.~B. Shroff, and E.~S. Bentley, ``Understanding the
  impact of limited channel state information on massive mimo network
  performances,'' in {\em ACM MOBIHOC}, pp.~251--260, July 2016.

\bibitem{introduction04}
Q.~H. Spencer, C.~B. Peel, A.~L. Swindlehurst, and M.~Haardt, ``An introduction
  to the multi-user {MIMO} downlink,'' {\em IEEE Communications Magazine},
  vol.~42, no.~10, pp.~60--67, 2004.

\bibitem{zhou2015signpost}
A.~Zhou, T.~Wei, X.~Zhang, M.~Liu, and Z.~Li, ``Signpost: Scalable {MU-MIMO}
  signaling with zero csi feedback,'' in {\em ACM MOBIHOC}, pp.~327--336, ACM,
  2015.

\bibitem{choi2010achieving}
J.~I. Choi, M.~Jain, K.~Srinivasan, P.~Levis, and S.~Katti, ``Achieving single
  channel, full duplex wireless communication,'' in {\em ACM MOBICOM},
  pp.~1--12, ACM, 2010.

\bibitem{duarte2012experiment}
M.~Duarte, C.~Dick, and A.~Sabharwal, ``Experiment-driven characterization of
  full-duplex wireless systems,'' {\em IEEE Transactions on Wireless
  Communications}, vol.~11, no.~12, pp.~4296--4307, 2012.

\bibitem{bharadia2013full}
D.~Bharadia, E.~McMilin, and S.~Katti, ``Full duplex radios,'' {\em ACM SIGCOMM
  Computer Communication Review}, vol.~43, no.~4, pp.~375--386, 2013.

\bibitem{yang2014characterizing}
Y.~Yang, B.~Chen, K.~Srinivasan, and N.~B. Shroff, ``Characterizing the
  achievable throughput in wireless networks with two active {RF} chains,'' in
  {\em IEEE INFOCOM}, pp.~262--270, 2014.

\bibitem{lin2006tutorial}
X.~Lin, N.~B. Shroff, and R.~Srikant, ``A tutorial on cross-layer optimization
  in wireless networks,'' {\em IEEE Journal on Selected areas in
  Communications}, vol.~24, no.~8, pp.~1452--1463, 2006.

\bibitem{yang2015scheduling}
Y.~Yang and N.~B. Shroff, ``Scheduling in wireless networks with full-duplex
  cut-through transmission,'' in {\em IEEE INFOCOM}, pp.~2164--2172, 2015.

\bibitem{bharadia2014full}
D.~Bharadia and S.~Katti, ``Full duplex {MIMO} radios,'' in {\em USENIX NSDI},
  pp.~359--372, 2014.

\bibitem{everett2016spatial}
E.~Everett and A.~Sabharwal, ``Spatial degrees-of-freedom in large-array
  full-duplex: the impact of backscattering,'' {\em EURASIP Journal on Wireless
  Communications and Networking}, vol.~2016, no.~1, p.~286, 2016.

\bibitem{du2014mimo}
X.~Du, J.~Tadrous, C.~Dick, and A.~Sabharwal, ``{MIMO} broadcast channel with
  continuous feedback using full-duplex radios,'' in {\em Asilomar Conference
  on Signals, Systems and Computers}, pp.~1701--1705, IEEE, 2014.

\bibitem{du2015mu}
X.~Du, J.~Tadrous, C.~Dick, and A.~Sabharwal, ``{MU-MIMO} beamforming with
  full-duplex open-loop training,'' in {\em International Workshop on Signal
  Processing Advances in Wireless Communications (SPAWC)}, pp.~301--305, IEEE,
  2015.

\bibitem{xie2014does}
X.~Xie and X.~Zhang, ``Does full-duplex double the capacity of wireless
  networks?,'' in {\em IEEE INFOCOM}, pp.~253--261, 2014.

\bibitem{sahai2013uplink}
A.~Sahai, S.~Diggavi, and A.~Sabharwal, ``On uplink/downlink full-duplex
  networks,'' in {\em Asilomar Conference on Signals, Systems and Computers},
  pp.~14--18, IEEE, 2013.

\bibitem{maravsevic2016capacity}
J.~Mara{\v{s}}evi{\'c} and G.~Zussman, ``On the capacity regions of
  single-channel and multi-channel full-duplex links,'' {\em arXiv preprint
  arXiv:1605.07559}, 2016.

\bibitem{andrews2004scheduling}
M.~Andrews, K.~Kumaran, K.~Ramanan, A.~Stolyar, R.~Vijayakumar, and P.~Whiting,
  ``Scheduling in a queuing system with asynchronously varying service rates,''
  {\em Probability in the Engineering and Informational Sciences}, vol.~18,
  no.~02, pp.~191--217, 2004.

\bibitem{ji2015achieving}
B.~Ji, G.~R. Gupta, M.~Sharma, X.~Lin, and N.~B. Shroff, ``Achieving optimal
  throughput and near-optimal asymptotic delay performance in multichannel
  wireless networks with low complexity: a practical greedy scheduling
  policy,'' {\em IEEE/ACM Transactions on Networking}, vol.~23, no.~3,
  pp.~880--893, 2015.

\bibitem{neely2013delay}
M.~J. Neely, ``Delay-based network utility maximization,'' {\em IEEE/ACM
  Transactions on Networking}, vol.~21, no.~1, pp.~41--54, 2013.

\bibitem{tassiulas1992stability}
L.~Tassiulas and A.~Ephremides, ``Stability properties of constrained queueing
  systems and scheduling policies for maximum throughput in multihop radio
  networks,'' {\em IEEE Transactions on Automatic Control}, vol.~37, no.~12,
  pp.~1936--1948, 1992.

\bibitem{eryilmaz2005stable}
A.~Eryilmaz, R.~Srikant, and J.~R. Perkins, ``Stable scheduling policies for
  fading wireless channels,'' {\em IEEE/ACM Transactions on Networking},
  vol.~13, no.~2, pp.~411--424, 2005.

\end{thebibliography}
}
\fi

\begin{appendices}
\section{Proof of Lemma \ref{lem:shift_zero}}\label{APP:L1}
Assume we have a scheduling vector $\mathbf{f}=(u_1,\cdots, u_i, 0, u_{i+2}, \cdots, u_K)$ and the shifted version $\mathbf{f'}=(u_1, \cdots, u_i, u_{i+2}, \cdots, u_K, 0)$. We have:
\begin{align}
w(\mathbf{f'})-w(\mathbf{f})&=\sum_{j=1}^K Q_{u_j} R_{u_j}^{\mathbf{f'}}-\sum_{j=1}^K Q_{u_j} R_{u_j}^{\mathbf{f}}\nonumber\\
&=\left(\sum_{j=1}^ i Q_{u_j} R_{u_j}^{\mathbf{f'}}+\sum_{j=i+2}^ K Q_{u_j} R_{u_j}^{\mathbf{f'}}\right)-\left(\sum_{j=1}^ i Q_{u_j} R_{u_j}^{\mathbf{f}}+\sum_{j=i+2}^ K Q_{u_j} R_{u_j}^{\mathbf{f}}\right).\label{weight_diff}
\end{align}
Note that for any $j\leq i$, we have $R_{u_j}^{\mathbf{f'}}=R_{u_j}^{\mathbf{f}}$ and
\begin{align}
R_{u_j}^{\mathbf{f}}=
\begin{cases}
\sum\limits_{t=j+1}^i\mathbf{1}_{\{g(u_j)\neq g(u_t)\}}+\sum\limits_{t=i+2}^K\mathbf{1}_{\{g(u_j)\neq g(u_t)\}}, &j<i\cr \sum\limits_{t=i+2}^K\mathbf{1}_{\{g(u_j)\neq g(u_t)\}}, &j=i\end{cases}.\label{rate_j_i}
\end{align}
For any $j\geq i+2$, we have:
\begin{align}
R_{u_j}^{\mathbf{f'}}=\sum_{t=j+1}^K \mathbf{1}_{\{g(u_j)\neq g(u_t)\}}+1=R_{u_j}^{\mathbf{f}}+1.\label{rate_i_j}
\end{align}
Substituting (\ref{rate_j_i}) and (\ref{rate_i_j}) into (\ref{weight_diff}), we have:
\begin{align}
w(\mathbf{f'})-w(\mathbf{f})=\sum_{j=i+2}^KQ_{u_j}\geq 0.
\end{align}

\section{Proof of Lemma \ref{lem:LQF}}\label{APP:L2}
We will prove the Lemma by showing that for fixed $K$ and any fixed user set of $\Omega$ users $\{u_1,\ldots,u_\Omega\}$, among all scheduling vectors LQF 
-type of scheduling vector maximizes the weight $w(\mathbf{f})$. We use mathematical induction to prove this claim.

\textbf{\emph{Base case:}} $\Omega=2$, i.e., there are only two users in the network, user $u_1$ and user $u_2$. Without loss of generality, we assume $Q_{u_1}\geq Q_{u_2}$. According to Lemma \ref{lem:shift_zero}, we only need to consider two strategies $\mathbf{f_1}=(u_1, u_2,0, \cdots, 0)$ and $\mathbf{f_2}=(u_2, u_1, 0, \cdots, 0)$. Note that $\mathbf{f_1}$ is the scheduling vector generated by LQF.

\emph{Case 1:} user $u_1$ and user $u_2$ are from different groups, i.e., $g(u_1)\neq g(u_2)$. 

We can compute $w(\mathbf{f_1})$ and $w(\mathbf{f_2})$ as follows:
\begin{align}
w(\mathbf{f_1})=(K-1)Q_{u_1}+(K-2)Q_{u_2}.
\end{align}
\begin{align}
w(\mathbf{f_2})=(K-1)Q_{u_2}+(K-2)Q_{u_1}.
\end{align}

Since $Q_{u_1}\geq Q_{u_2}$, we have $w(\mathbf{f_1})\geq w(\mathbf{f_2})$.

\emph{Case 2:} user $u_1$ and user $u_2$ come from the same group, i.e., $g(u_1)=g(u_2)$. 

Both strategies $\mathbf{f_1}$ and $\mathbf{f_2}$ end up with the same weight $(K-2)(Q_{u_1}+Q_{u_2})$.

Combining both cases, we can show that $w(\mathbf{f_1})\geq w(\mathbf{f_2})$, which implies scheduling vector $\mathbf{f_1}$ maximizes the total weight $w(\mathbf{f})$.

\textbf{\emph{Inductive hypothesis:}} Assume the LQF-type of scheduling vector maximizes the weight $w(\mathbf{f})$ for any fixed scheduled user set with $\Omega-1$ users.

\textbf{\emph{Inductive step:}} We need to show that the scheduling vector generated by LQF maximizes weight $w(\mathbf{f})$ for any fixed scheduled user set $\{u_1, \cdots, u_{\Omega}\}$. Without loss of generality, we assume $Q_{u_1}\geq\cdots\geq Q_{u_{\Omega}}$. Let $(u_{\sigma_1},\cdots, u_{\sigma_{\Omega}})$ denote an arbitrary permutation of $(u_1,\cdots, u_{\Omega})$, and its resulting scheduling vector $\mathbf{f^{\sigma}}=(u_{\sigma_1}, \cdots, u_{\sigma_{\Omega}}, 0, \cdots, 0)$. 
Let $\mathbf{f_1}=(u_1, \cdots, u_{\Omega}, 0, \cdots, 0)$ denote the scheduling vector generated by LQF.
It is equivalent to show $w(\mathbf{f}_1)\geq w(\mathbf{f^{\sigma}})$.

Start from an arbitrary scheduling vector $\mathbf{f^{\sigma}}=(u_{\sigma_1},\cdots, u_{\sigma_{\Omega}}, 0, \cdots, 0)$.
Let's fix the first element in $\mathbf{f^{\sigma}}$, say $u_{\sigma_1}=u_c$. What is the resulting optimal schedule? Recall that our goal is to find a permutation $\{u_{\sigma_2},\cdots, u_{\sigma_{\Omega}}\}$ of $\{u_1,\cdots, u_{\Omega}\}\setminus \{u_c\}$, such that $\sum_{i=1}^{\Omega}Q_{u_{\sigma_i}}R_{u_{\sigma_i}}^{\mathbf{f}}$ is maximized.
\begin{align}
&\quad\max_{u_{\sigma_2}, \cdots, u_{\sigma_{\Omega}}}\sum_{i=1}^\Omega Q_{u_{\sigma_i}}R_{u_{\sigma_i}}^{\mathbf{f}}\nonumber\\
&=\max_{u_{\sigma_2}, \cdots, u_{\sigma_{\Omega}}}\sum_{i=1}^\Omega Q_{u_{\sigma_i}}\sum_{j=i+1}^{K}\mathbf{1}_{\{g(u_{\sigma_i})\neq g(u_{\sigma_j})\}}\nonumber\\
&=\max_{u_{\sigma_2}, \cdots, u_{\sigma_{\Omega}}}\left(Q_{u_c}\sum_{j=2}^K\mathbf{1}_{\{g(u_c)\neq g(u_{\sigma_j})\}}+\sum_{i=2}^\Omega Q_{u_{\sigma_i}}\sum_{j=i+1}^{K}\mathbf{1}_{\{g(u_{\sigma_i})\neq g(u_{\sigma_j})\}}\right).\label{induction_objective}
\end{align}

The first term of R. H. S. of (\ref{induction_objective}) is the same for any schedules (permutations), hence we only need to focus on the second term. The optimal value of $u_{\sigma_2},\cdots, u_{\sigma_{\Omega}}$ solves the following optimization problem (P1):
\begin{align}
\max_{u_{\sigma_2}, \cdots, u_{\sigma_{\Omega}}}\sum_{i=2}^\Omega Q_{u_{\sigma_i}}\sum_{j=i+1}^{K}\mathbf{1}_{\{g(u_{\sigma_i})\neq g(u_{\sigma_j})\}}.
\end{align}

Let ${u}_{\sigma_j}'=u_{\sigma_{j+1}}$ for any $j=1,2, \cdots, \Omega-1$, we can rewrite P1 as:
\begin{align}
\max_{{u}_{\sigma_1}', \cdots, {u}_{\sigma_{\Omega-1}}'}\sum_{i=1}^{\Omega-1}Q_{{u}_{\sigma_i}'}\sum_{j=i+1}^{K-1} \mathbf{1}_{\{g({u}_{\sigma_i}')\neq g({u}_{\sigma_j}')\}}.
\end{align}
The optimal scheduling vector ${\mathbf{f}'}^*$ of P1 satisfies:
\begin{align}
{\mathbf{f}'}^*&=\argmax\sum_{i=1}^{\Omega-1}Q_{{u}_{\sigma_i}'}\sum_{j=i+1}^{K-1} \mathbf{1}_{\{g({u}_{\sigma_i}')\neq g({u}_{\sigma_j}')\}}\nonumber\\
&\stackrel{(a)}{=}\argmax\left(\sum_{i=1}^{\Omega-1}Q_{{u}_{\sigma_i}'}\sum_{j=i+1}^{K-1} \mathbf{1}_{\{g({u}_{\sigma_i}')\neq g({u'}_{\sigma_j})\}}\qquad+\sum_{u\subset\{u_1, \cdots, u_{\Omega}\}\setminus\{u_c\}}Q_u\right)\nonumber\\
&\stackrel{(b)}{=}\argmax\sum_{i=1}^{\Omega-1}Q_{{u}_{\sigma_i}'}\ \left(\sum_{j=i+1}^{K-1} \mathbf{1}_{\{g({{u}_{\sigma_i}'})\neq g({u'}_{\sigma_j})\}}+1\right).
\label{equ:P1_P2}
\end{align}
(a) holds since $\sum_{u\subset\{u_1, \cdots, u_{\Omega}\}\setminus\{u_c\}}Q_u$ is a constant for any possible schedule using users $\{u_1, \cdots, u_{\Omega}\}$. (b) holds since $u_{\sigma_1}',\cdots, u_{\sigma_{\Omega-1}}'$ is a permutation of $\{u_1,\cdots, u_{\sigma_{\Omega}}\}\setminus\{u_c\}$.

Equation (\ref{equ:P1_P2}) implies that the optimal scheduling vector ${\mathbf{f}'}^*$ also solves the MaxWeight problem for scheduled set with  $\Omega-1$ users \{${u}_{\sigma_1}', \cdots, {u}_{\sigma_{\Omega-1}}'\}$, equivalently, $\{u_{\sigma_2}, \cdots, u_{\sigma_{\Omega}}\}$. From inductive hypothesis, we know:
\begin{align}
{\mathbf{f}'}^*=
\begin{cases}
(u_1,\cdots, u_{c-1}, u_{c+1}, \cdots, u_{\Omega}, 0, \cdots, 0), &c\neq 1\cr (u_2,\cdots, u_{\Omega}, 0, \cdots, 0), &c=1\end{cases}.
\end{align}

Hence, after fixing the first element $u_{\sigma_1}=u_c$, the resulting optimal scheduling vector will take the following form:
\begin{align}
{\mathbf{f}_c}=
\begin{cases}
(u_c, u_1,\cdots, u_{c-1}, u_{c+1}, \cdots, u_{\Omega}, 0, \cdots, 0), &c\neq 1\cr (u_1, u_2,\cdots, u_{\Omega}, 0, \cdots, 0), &c=1\end{cases}.
\end{align}

Next, we only need to prove $w(\mathbf{f}_1)\geq w(\mathbf{f}_c)$ for any $c\neq 1$. 
Note that $\mathbf{f}_1$ and $\mathbf{f}_c$ agree on the scheduled users from $c+1^{th}$ mini-slot to the end, hence $R_{u_i}^{\mathbf{f}_1}=R_{u_i}^{\mathbf{f_c}}$ for any $i\in\{c+1, \cdots, \Omega\}$. The weight difference only comes from user set $\{u_1, \cdots, u_c\}$.
\begin{align}
&w(\mathbf{f}_1)-w(\mathbf{f}_c)\nonumber\\
&=\sum_{i=1}^{\Omega}Q_{u_i}\sum_{j=i+1}^{K}\mathbf{1}_{\{g(u_i)\neq g(u_j)\}}\nonumber\\
&\quad-\left(Q_{u_c}\sum_{j=1, j\neq c}^K\mathbf{1}_{\{g(u_c)\neq g(u_j)\}}+\sum_{i=1, i\neq c}^{\Omega}Q_{u_i}\sum_{j=i+1, j\neq c}^{K}\mathbf{1}_{\{g(u_i)\neq g(u_j)\}}\right)\nonumber\\
&=\sum_{i=1}^{c-1}Q_{u_i}\mathbf{1}_{\{g(u_i)\neq g(u_c)\}}-Q_{u_c}\sum_{j=1}^{c-1}\mathbf{1}_{\{g(u_c)\neq g(u_j)\}}\nonumber\\
&=\sum_{i=1}^{c-1}\left(Q_{u_i}-Q_{u_c}\right)\mathbf{1}_{\{g(u_i)\neq g(u_c)\}}\geq 0.\label{weight_difference}
\end{align}
The last inequality of (\ref{weight_difference}) comes form the assumption $Q_{u_1}\geq Q_{u_2} \geq \cdots \geq Q_{u_{\Omega}}$.


\section{Proof of Lemma \ref{lem:chooselargest}}\label{APP:L3}
If the set $\mathcal{U}^{\mathbf{f}}_i/\ \mathcal{P}^{\mathbf{m}}_i$ is non-empty, that means there is a user $u_l$ whose queue-length is among the $m_i$ longest queues, but is not selected by $\mathbf{f}$. The schedule $\mathbf{f}$, instead, chooses another user $u_s$, whose queue-length is not among the $m_i$ longest queue in group $i$. Thus, we know $g(u_l)=g(u_s)=i$ and $Q_{u_l}\geq Q_{u_s}$. By replacing $u_s$ by $u_l$, $u_l$ may be scheduled earlier than $u_s$ due to its larger queue-length, which means some users with queue-lengths in between $Q_{u_s}$ and $Q_{u_l}$ may have to be scheduled one mini-slot later. This action will only affect those users that are not from group $i$, since users from the same group are not able to transmit any packet anyway.
Denote $\mathcal{\mathcal{Y}}$ to be the set of such users that its rate is affected by replacing $u_s$ with $u_l$.
\begin{align}
w(\mathbf{f}')-w(\mathbf{f})&=Q_{u_l}R^{\mathbf{f}'}_{u_l}-Q_{u_s}R^{\mathbf{f}}_{u_s}-\sum_{y\in \mathcal{Y}}Q_y\nonumber\\
&=Q_{u_l}(R^{\mathbf{f}'}_{u_l}-R^{\mathbf{f}}_{u_s})+(Q_{u_l}-Q_{u_s})R^{\mathbf{f}}_{u_s}+\sum_{y\in \mathcal{Y}}Q_y\nonumber\\
&\geq Q_{u_l}(R^{\mathbf{f}'}_{u_l}-R^{\mathbf{f}}_{u_s})-\sum_{y\in \mathcal{Y}}Q_y\nonumber\\
&\stackrel{(c)}=\sum_{y\in \mathcal{Y}}(Q_{u_l}-Q_y)\geq 0.
\end{align}
equality (c) holds since $|\mathcal{Y}|=R^{\mathbf{f}'}_{u_l}-R^{\mathbf{f}}_{u_s}$, which are equivalent ways to count the number of users whose rate are affected.
\section{Proof of Lemma \ref{lem:chooselargest2}}\label{APP:L4}
Suppose there exists a schedule $\mathbf{f}$ which maximizes weight $w(\mathbf{f})$ but it does not pick users with longest queue-length in a certain group $i$. In this case, $\mathcal{U}^{\mathbf{f}}_i/\ \mathcal{P}^{\mathbf{m}}_i$ is not empty and we can find $u_s\in \mathcal{U}^{\mathbf{f}}_i/\ \mathcal{P}^{\mathbf{m}}_i$ and $u_l\in \mathcal{P}^{\mathbf{m}}_i/\ \mathcal{U}^{\mathbf{f}}_i$, such that $Q_{u_s}<Q_{u_l}$. From Lemma \ref{lem:chooselargest}, we know by replacing $Q_{u_s}$ by $Q_{u_l}$, we can maintain the same $\mathbf{m}$ while the total weight can be strictly increased. This fact contradicts with the assumption that $\mathbf{f}$ maximizes the weight $w(\mathbf{f})$.
\section{Proof of Lemma \ref{lem:marginal}}\label{APP:L5}
Assume there exists $\tau_1<\cdots<\tau_{s}$, such that $\Delta^{\mathbf{f}, \tau_1}_{u^*_{\tau_1}}<0, \cdots, \Delta^{\mathbf{f}, \tau_s}_{u^*_{\tau_s}}<0$. Start with $\tau_s$, we can show that the weight will strictly increase by skipping this user $u_{\tau_s}$ and continue service with the remaining users in $\mathbf{f}^*$. Use $w(\mathbf{f}')$ to denote the weight of the schedule without user $u_{r_s}$.
\begin{align}
w(\mathbf{f}')-w(\mathbf{f}^*)&=\underbrace{\sum_{j=1}^{\tau_s-1}Q_{u^*_j}\mathbf{1}_{\{g(u^*_j)=g(u^*_{\tau_s})\}}}_{\text{No longer block early scheduled users}}+\underbrace{\sum_{j=\tau_s+1}^{\Omega}Q_{u^*_{j}}}_{\text{Late users scheduled one mini-slot earlier}}\nonumber\\
&\quad-\underbrace{Q_{u^*_{\tau_s}}\sum_{j=\tau_s+1}^K\mathbf{1}_{\{g(u^*_j)\neq g(u^*_{\tau_s})\}}}_{\text{No weight contributed by user $u^*_{\tau_s}$}}\nonumber\\
&=\sum_{j=1}^{\tau_s-1}Q_{u^*_j}\mathbf{1}_{\{g(u^*_j)=g(u^*_{\tau_s})\}}+\sum_{j=\tau_s+1}^{\Omega}Q_{u^*_{j}}-Q_{u^*_{\tau_s}}\left(K-\tau_s-\sum_{j=\tau_s+1}^K\mathbf{1}_{\{g(u^*_j)= g(u^*_{\tau_s})\}}\right)\nonumber\\
&\geq \sum_{j=1}^{\tau_s-1}Q_{u^*_j}\mathbf{1}_{\{g(u^*_j)=g(u^*_{\tau_s})\}}-Q_{u^*_{\tau_s}}\left(K-\tau_s\right)\nonumber\\
&=-\Delta^{\mathbf{f}, \tau_s}_{u^*_{\tau_s}}>0.
\end{align}

Continuing with the same procedure for $\tau_{s-1}, \cdots, \tau_1$, the total weight will keep increasing, then we can come up with a new schedule such that the marginal gain is always non-negative, and its weight is strictly larger than the MaxWeight schedule, which contradicts the optimality of the original schedule. Therefore, Lemma \ref{lem:marginal} holds.

\section{Proof of Lemma \ref{lem:opt_greedy}}\label{APP:L6}
We prove the lemma by contradiction, suppose there exists user $b_0$ and $d_0$, such that $t_1(b_0)\geq t_2(d_0)$ and $g(b_0)=g(d_0)$. From Lemma $\ref{lem:chooselargest2}$ we know, both MaxWeight and MGG schedule will always schedule users with longest queue-length in each group, but they may schedule different number of users in each group since they may have different user-selection vectors. If the MaxWeight schedule picks more users in group $i$, then there will be some users in group $i$ that is in $\mathcal{B}$ and no user in group $i$ will be in $\mathcal{C}$. Otherwise, there will be some users in group $i$ that is in $\mathcal{C}$ and no user in group $i$ will be in $\mathcal{B}$. Consider group $g(b_0)$, no user in group $g(b_0)$ is in $\mathcal{C}$ because the MaxWeight schedule picks more users in group $i$. Since user $d_0$ is skipped in $t_2(d_0)^{th}$ mini-slot, which means the marginal gain $\Delta^{\mathbf{f}^G, t_2(d_0)}_{d_0}<0$ (guaranteed by the MGG policy):
\begin{align}
\Delta^{\mathbf{f}^G, t_2(d_0)}_{d_0}&=Q_{d_0}\left(K-t_2(d_0)\right)-\underbrace{\sum_{i=1}^{t_2(d_0)-1}Q_{u_i}\mathbf{1}_{\{g(u_i)=g(d_0)\}}}_{\text{Sum of queue-lengths in $\mathcal{A}\cap \mathcal{G}_{g(d_0)}$}}\nonumber\\
&<0.
\end{align}
Consider the marginal gain of user $b_0$:
\begin{align}
\Delta^{\mathbf{f}^G, t_1(b_0)}_{b_0}&=Q_{b_0}\left(K-t_1(b_0)\right)-\underbrace{\sum_{i=1}^{t_1(b_0)-1}Q_{u_i}\mathbf{1}_{\{g(u_i)=g(b_0)\}}}_{\text{Sum of queue-lengths in $\mathcal{A}\cap \mathcal{G}_{g(b_0)}$}}\nonumber\\
&\leq Q_{b_0}(K-t_2(d_0))-\sum_{i=0}^{t_2(d_0)-1}Q_{u_i}\mathbf{1}_{\{g(u_i)=g(d_0)\}}\nonumber\\
&=\Delta^{\mathbf{f}^G, t_2(d_0)}_{d_0}<0.
\end{align}
which contradicts the result of Lemma \ref{lem:marginal}, hence Lemma \ref{lem:opt_greedy} holds.

\section{Proof of Lemma \ref{lem:BsmallC}}\label{APP:L7}
Note that for a given schedule, if we have several users with the same queue-length, the ordering of these users does not affect the weight. Therefore, given MaxWeight and MGG schedules, we can reorder the schedule such that users with the same queue-length will follow ``$\mathcal{A}$ first, $\mathcal{B}/\mathcal{C}$ second'' order. Given $b\in\mathcal{B}$, we can find user $d$, such that $d$ has the longest queue-length among all users in group $g(b)$ that are not scheduled in the MGG schedule. From Lemma \ref{lem:opt_greedy}, we have $t_1(b)<t_2(d)$. Furthermore, $Q_b\leq Q_d$ from the definition of user $d$. Let $N_{\mathcal{A}}^{*}(t)$ and $N_{\mathcal{A}}^{G}(t)$ denote the total number of users in $\mathcal{A}$ from 1 to $t$ mini-slot in the MaxWeight and MGG schedule. We have:
\begin{align}
N_{\mathcal{A}}^{G}(t_1(b))&\leq N_{\mathcal{A}}^{G}(t_2(d))\leq |\{a\in\mathcal{A}|Q_a\geq Q_d\}|\nonumber\\
&\leq |\{a\in\mathcal{A}|Q_a\geq Q_b\}|=N_{\mathcal{A}}^{*}(t_1(b)).\label{equ:N_A_greedy_opt}
\end{align}

By definition, $N_{\mathcal{B}}(t_1(b))=t_1(b)-N_{\mathcal{A}}^{*}(t_1(b))$ and $N_{\mathcal{C}}(t_1(b))=t_1(b)-N_{\mathcal{A}}^{G}(t_1(b))$. Combining with (\ref{equ:N_A_greedy_opt}), we have $N_{\mathcal{B}}(t_1(b))\leq N_{\mathcal{C}}(t_1(b))$.

\section{Proof of Lemma \ref{lem:more}}\label{APP:L8}
Let $b_{L}$ denote the last scheduled user in the MaxWeight schedule that belongs to $\mathcal{B}$. 

\textbf{Case 1:} $b_L$ is the last scheduled user in $\mathbf{f}^*$.

Consider user $d_L$, which has the longest queue-length
among all users in group $g(b_L)$ that are not scheduled in the MGG schedule, From Lemma \ref{lem:opt_greedy}, $t_2(d_L)>t_1(b_L)$. Note that $t_1(b_L)$ is the total number of users $\mathbf{f}^*$ schedules, $\mathbf{f}^G$ schedules at least $t_2(d_L)-1 (\geq t_1(b_L))$ users. 

\textbf{Case 2:} $b_L$ is not the last scheduled user in $\mathbf{f}^*$.

After $t_1(b_L)^{th}$ mini-slot, $\mathbf{f}^*$ schedules users in $\mathcal{A}$ only. Total number of users scheduled in $\mathbf{f}^*$ is $t_1(b_L)+|\mathcal{A}|-N^{*}_{\mathcal{A}}(t_1(b_L))$. From Lemma \ref{lem:BsmallC}, we know 
\begin{align}
N^G_{\mathcal{A}}(t_1(b_L))\leq N^{*}_{\mathcal{A}}(t_1(b_L)).
\end{align}
After $t_1(b_L)^{th}$ mini-slot, the MGG schedule must schedule the remaining users in $\mathcal{A}$. The total number of users scheduled in $\mathbf{f}^G$ is at least $t_1(b_L)+|\mathcal{A}|-N^G_{\mathcal{A}}(t_1(b_L))\geq t_1(b_L)+|\mathcal{A}|-N^{*}_{\mathcal{A}}(t_1(b_L))$.

\section{Proof of Lemma \ref{lem:nonnegative}}\label{APP_12}
\begin{definition}(\textit{Available rate})
	Let $S(t)$ denote the ``available rate'' in $t^{th}$ mini-slot:
	\[S(t+1)=S(t)+\begin{cases}
	t_1(b_j)-t_2(c_j)&\text{if $u^G_t=c_j$},\\
	-(t_2(a_j)-t_1(a_j))&
	\text{if $u^G_t=a_j$}.
	\end{cases}\]
	where the initial value $S(0)=0$ and $u^G_t$ is the $t^{th}$ element in the MGG schedule $\mathbf{f}^G$.
\end{definition}

Start with the first scheduled user in $\mathbf{f}^G$, if we encounter with a user from $\mathcal{C}$, then the ``available rate'' will be added $t_1(b_j)-t_2(c_j)$ more rates offered by users in $\mathcal{C}$. Otherwise, the ``available rate'' will be deducted by ``$\mathcal{A}$ loss rate'' $t_2(a_j)-t_1(a_j)$. In general, $S(t+1)$ is the sum of available rate of queue-length no smaller than $Q_{u^G_t}$. The definition of $S(t)$ allows us to decouple the queue-length from its rate, and to evaluate (\ref{equ:g+E_opt}), we only need to compare the ``available rate" and ``$\mathcal{A}$ loss rate". If for any $1\leq t\leq K$, $S(t)$ is always non-negative, then the R. H. S. of (\ref{equ:g+E_opt}) is also non-negative. Consider each $t$ such that $u^G_t \in\mathcal{A}$, $S(t+1)\geq 0$ means the sum of available rate received by users with  queue-length higher than $Q_{u^G_t}$ is larger than the ``$\mathcal{A}$ loss rate'' on user $u^G_t$. That is to say, for each $a_i$, there will be sufficiently many rate offered by users in $\mathcal{C}$ which have longer queue-length than $Q_{a_i}$. It is sufficient to show that the R. H. S. of (\ref{equ:g+E_opt}) is non-negative. 

On the other hand, we can rewrite the recursion formula of $S(t)$ as:
\[S(t+1)=S(t)+\begin{cases}
t_1(b_j)-t_2(c_j)&\text{if $u^G_t=c_j$},\\
t_1(a_j)-t_2(a_j)&
\text{if $u^G_t=a_j$}.
\end{cases}\]
Start from $t=1$, for each user $u^G_t$, $S(t)$ increments by the time difference of scheduling the same user or the corresponding user under mapping $h$.
Thus, $S(t)$ is actually the difference between the sum of $t$ different timestamps in MaxWeight schedule and the sum of $t$ consecutive timestamps from $1, 2, \cdots$ up to $t$. The later sum is the minimum of the sum of $t$ different timestamps, hence $S(t)\geq 0$ holds for any $1\leq t\leq K$. 

\section{Proof of Lemma \ref{lem:induction}}\label{APP:L9}
We use mathematical induction to prove this lemma.

\textbf{\emph{Base Case:}}
If $m=1$, it is the trivial case, since $\epsilon_i^{K_1}=0$.

If $m=2$, we have:
\begin{align}
\frac{\epsilon^{K_2}_i}{w_i(\mathbf{f}^G_{K_2})}=\frac{Q_{u_1}}{Q_{u_1}(T_2+t_2(u_2)-t_2(u_1)-1)+Q_{u_2}T_2}.
\end{align}

We know that $Q_{u_2}T_2\geq Q_{u_1}$ and $T_2\geq 1$, $t_2(u_2)-t_2(u_1)\geq 1$. Hence $Q_{u_1}(T_2+t_2(u_2)-t_2(u_1)-1)+Q_{u_2}T_2\geq 2Q_{u_1}$ and  $\frac{\epsilon^{K_2}_i}{w_i(\mathbf{f}^G_{K_2})}\leq 1/2$.

\textbf{\emph{Inductive hypothesis:}}
Assume the lemma holds for $m$ users from group $i$, i.e., $\frac{\epsilon^{K_m}_i}{w_i\left(\mathbf{f}^G_{K_m}\right)}\leq 1/2$.

\textbf{\emph{Inductive step:}}
consider the case where we have $m+1$ users from group $i$ ($Q_{u_1}\geq Q_{u_2}\cdots \geq Q_{u_{m+1}}$). $T_{m+1}$ must satisfy:
\begin{numcases}{}
Q_{u_{m+1}}T_{m+1}\geq\sum^m_{j=1}Q_{u_j}. \label{equ:case1}\\
\begin{split}
Q_{u_{m+1}}(T_{m+1}-1)<\sum_{j=1}^mQ_{u_j}.\label{equ:case2}\\
\end{split}
\end{numcases}

User $u_1, u_2, \cdots, u_m$ will determine $T_m$:
\begin{numcases}{}
Q_{u_m}T_m\geq\sum_{j=1}^{m-1}Q_{u_j}.\label{equ:case3}\\
\begin{split}
Q_{u_m}(T_m-1)<\sum_{j=1}^{m-1}Q_{u_j}.\label{equ:case4}\\
\end{split}.
\end{numcases}

We then evaluate $\epsilon^{K_{m+1}}_i/w_i(\mathbf{f}^G_{K_{m+1}})$:
\ifreport
\begin{align}
\frac{\epsilon^{K_{m+1}}_i}{w_i\left(\mathbf{f}^G_{K_{m+1}}\right)}=\frac{\epsilon_i^{K_m}+\overbrace{\sum_{j=1}^mQ_{u_j}}^{\text{Additional extra weight by adding $u_{m+1}$}}}{w_i\left(\mathbf{f}^G_{K_m}\right)+\underbrace{Q_{u_{m+1}} T_{m+1}}_{\text{Additional actual weight on $u_{m+1}$}}+\underbrace{\sum_{j=1}^m Q_{u_j}\left(K_{m+1}-K_{m}-1\right)}_{\text{Additional actual weight on $u_1,\cdots, u_m$}} }.
\end{align}
\else
\begin{align}
&\frac{\epsilon^{K_{m+1}}_i}{w_i\left(\mathbf{f}^G_{K_{m+1}}\right)}\nonumber\\
&=\frac{\epsilon_i^{K_m}+\sum\limits_{j=1}^mQ_{u_j}}
{w_i\left(\mathbf{f}^G_{K_m}\right)+Q_{u_{m+1}} T_{m+1}+\sum\limits_{j=1}^m Q_{u_j}\left(K_{m+1}-K_{m}-1\right)}.
\end{align}
\fi

Given the inductive hypothesis, it suffices to show
\begin{align}
2\sum_{j=1}^mQ_{u_j}\leq Q_{u_{m+1}} T_{m+1}+\sum_{j=1}^m Q_{u_j}\left(K_{m+1}-K_{m}-1\right).
\end{align}
From (\ref{equ:case1}), we already know $\sum^m_{j=1}Q_{u_j}\le Q_{u_{m+1}}T_{m+1}$. We only need to show $\sum^m_{j=1}Q_{u_j}\le \sum_{j=1}^m Q_{u_j}\left(K_{m+1}-K_{m}-1\right)$, equivalently, $K_{m+1}-K_{m}-1\geq 1$.
By definition, $K_{m+1}=t_2(u_{m+1})+T_{m+1}\geq t_2(u_m)+1+T_{m+1}$, $K_m=t_2(u_m)+T_m$. The only thing left is to show $T_{m+1}-T_{m}\geq 1$ $(T_{m+1}>T_{m})$. Suppose  $T_{m}\geq T_{m+1}$, from (\ref{equ:case4}), we know:
\begin{align}
Q_{u_m}T_m<\sum_{j=1}^{m-1}Q_{u_j}+Q_{u_m}=\sum_{j=1}^m Q_{u_j}.
\end{align}
Then,
\begin{align}
Q_{u_{m+1}}T_{m+1}\leq Q_{u_m}T_m<\sum_{j=1}^mQ_{u_j}.\label{equ:condiaict}
\end{align}
(\ref{equ:condiaict}) contradicts (\ref{equ:case1}), therefore, $T_{m+1}>T_m$, $T_{m+1}-T_m\geq 1$, Lemma \ref{lem:induction} holds. 

\section{Proof of Lemma \ref{lem:weight_domin}}\label{APP_L10}
Assume a scheduling policy $\mathbf{f}$ under Half-duplex will schedule $m$ users $u_1, \cdots, u_m$. The transmission starts at $m+1^{th}$ mini-slot, and each user will receive a rate of $K-m$. The total weight $w_{HD}(\mathbf{f})$ can be calculated as:
\begin{align}
w_{HD}(\mathbf{f})=\sum_{i=1}^m Q_{u_i}\left(K-m\right)
\end{align}

To maximize $w_{HD}(\mathbf{f})$, we know that given $m$, the scheduler must choose $m$ users with the longest queue-length. We only need to decide the optimal value for $m$. Start from $m=1$, we can evaluate the weight $w_{HD}(\mathbf{f})$ for each $m$, the MaxWeight policy chooses the smallest $m^*$ such that adding one more user to the schedule decreases the total weight (negative ``marginal gain"). Assume $Q_{u_1}\geq Q_{u_2}\cdots \geq Q_{u_N}$, define $m^*$ as follows: 
\begin{align}
m^*\triangleq \min\{m: Q_{u_{m+1}}(K-m-1)< \sum_{i=1}^{m}Q_{u_i}\}.\label{equ:th3_1}
\end{align}

Consider the MGG schedule $\mathbf{f}^G=(u^g_1, \cdots, u^g_{\Omega},0,\cdots,0)$. Let $m'$ denote the largest integer such that for any $i\le m'$, $u^g_i=u_i$ holds. That means the MGG schedule also chooses $m'$ users with longest queue-length for the first $m'$ mini-slots. Since $m'$ is the largest integer, it must satisfy
\begin{align}
\Delta_{u_{m'+1}}^{\mathbf{f}^G,m'+1}=Q_{u_{m'+1}}(K-m'-1)-\sum_{i=1}^{m'}Q_{u_i}\mathbf{1}_{\{g(u_i)=g(u_m')\}}<0.\label{equ:th3_2}
\end{align}
which means $u_{m'+1}$ cannot be added into the MGG schedule after $u_1, u_2, \cdots, u_{m'}$.

Suppose $m'<m^*$, then we have:
\begin{align}
Q_{u_{m'+1}}(K-m'-1)< \sum_{i=1}^{m'}Q_{u_i}\mathbf{1}_{\{g(u_i)=g(u_m')\}}\leq\sum_{i=1}^{m'}Q_{u_i}.\label{equ:th3_3}
\end{align}
(\ref{equ:th3_3}) implies that there exists $m'<m^*$, such that the inequality in (\ref{equ:th3_1}) still holds. This contradicts the definition of $m^*$. Thus, $m'\geq m^*$.

Let $\tilde{\mathbf{f}}=(u_1, u_2,\cdots, u_{m^*}, 0 ,\cdots,0)$ and $\tilde{\mathbf{f'}}=(u_1, u_2, \cdots, u_{m'}, 0, \cdots, 0)$, $\tilde{\mathbf{f}}$ and $\tilde{\mathbf{f'}}$ are intermediate schedules produced by the MGG Algorithm after $m^{*}$ and $m'$ iterations. Hence, we have:
\begin{align}
w(\tilde{\mathbf{f'}})-w(\tilde{\mathbf{f}})=\sum_{i=m^*+1}^{m'}\Delta^{\mathbf{f}^G, i}_{u^g_i}\geq 0\label{equ:th3_5}
\end{align}
The weight difference is just the sum of marginal gain for adding each user of $u^g_{m^*+1}, u^g_{m^*+2},\cdots, u^g_{m'}$. For the same scheduling vector $\tilde{\mathbf{f}}$, we have $w(\tilde{\mathbf{f}})\geq w_{HD}(\tilde{\mathbf{f}})$. Combining with (\ref{equ:th3_5}), we have $w(\mathbf{f}^G)\geq w(\tilde{\mathbf{f'}})\geq w(\tilde{\mathbf{f}}) \geq w_{HD}(\tilde{\mathbf{f}})=w^*_{HD}$, which concludes the proof of Lemma \ref{lem:weight_domin}.

\section{Proof of Lemma \ref{lem:lyapunov}}\label{APP_11}
Since $\boldsymbol{\lambda}\in\text{int }\Lambda^*_{HD}$, there exists a $\delta>0$ and a mean service rate vector $\mu^{prob}\in \Lambda^*_{HD}$ that is achievable by a probabilistic policy $P^{prob}$ such that:
\begin{align}
\mu^{prob}_i\geq \lambda_i+\delta,  \text{   for all }i.
\end{align}

Define the quadratic Lyapunov function $V(\cdot)$ as:
\begin{align}
V(\mathbf{Q})=\frac{1}{2}\sum_{i=1}^N Q^2_i.
\end{align}

Let $\Delta_t V(\mathbf{q})$ denote the mean drift of $V(\cdot)$ for any $\mathbf{Q}[t]=\mathbf{q}$:
\begin{align}
\Delta_t V(\mathbf{q})=\mathbb{E}[V(\mathbf{Q}[t+1])-V(\mathbf{Q}[t])|\mathbf{Q}[t]=\mathbf{q}].
\end{align}
$\Delta_t V(\mathbf{q})$ can be upper bounded by:
\begin{align}
\Delta_t V(\mathbf{q})&\leq \sum_{i=1}^N q_i\mathbb{E}\left[A_i-R^{\mathbf{f}^G}_i|\mathbf{Q}[t]=\mathbf{q}\right]+\frac{1}{2}\sum_{i=1}^N  \mathbb{E}\left[\left(A_i-R^{\mathbf{f}^G}_i\right)^2|\mathbf{Q}[t]=\mathbf{q}\right].\nonumber\\
&\leq \sum_{i=1}^N q_i\left(\lambda_i-\mathbb{E}\left[\left(R^{\mathbf{f}^G}_i\right)^2|\mathbf{Q}[t]=\mathbf{q}\right]\right)+B.
\end{align}
as long as $\mathbb{E}[A^2_i]<\infty$ and $\mathbb{E}\left[\left(R^{\mathbf{f}^G}_i\right)^2\right]<\infty$, the second expectation is bounded by a constant $B<\infty$.

Let $\mathbf{f}_{HD}^*[t]$ denote the MaxWeight scheduling vector in time-slot $t$ under Half-duplex, we have:
\begin{align}
\mathbf{f}_{HD}^*[t]=\argmin_{\mathbf{f}\in\mathcal{F}_{HD}}\left(-\sum_{i=1}^N Q_i[t]R^{\mathbf{f}}_i[t]\right)=\argmax_{\mathbf{f}\in\mathcal{F}_{HD}}\left(\sum_{i=1}^N Q_i[t]R^{\mathbf{f}}_i[t]\right).
\end{align}

Due to its minimizing nature, we have
\begin{align}
-\sum_{i=1}^N q_i \mathbb{E}[R_i^{\mathbf{f}_{HD}^*}|\mathbf{Q}[t]=\mathbf{q}]&\leq -\sum_{i=1}^N q_i \mathbb{E}[R_i^{prob}|\mathbf{Q}[t]=\mathbf{q}]\nonumber\\
&\leq -\sum_{i=1}^N q_i \mathbb{E}[R_i^{prob}]=-\sum_{i=1}^N q_i\mu_i^{prob}\leq \sum_{i=1}^N q_i(\lambda_i+\delta).\label{equ:1}
\end{align}

On the other hand, from Lemma \ref{lem:weight_domin}, we know the weight $w(\mathbf{f}^G)$ dominates $w^*_{HD}$ under any queue-length vector $\mathbf{q}$. Therefore, 
\begin{align}
-\sum_{i=1}^Nq_i\mathbb{E}[R_i^{\mathbf{f}^G}|\mathbf{Q}[t]=\mathbf{q}]\leq -\sum_{i=1}^N q_i \mathbb{E}[R_i^{\mathbf{f}_{HD}^*}|\mathbf{Q}[t]=\mathbf{q}].\label{equ:2}
\end{align}

Combining (\ref{equ:1}) and (\ref{equ:2}), we have:
\begin{align}
-\sum_{i=1}^Nq_i\mathbb{E}[R_i^{\mathbf{f}^G}|\mathbf{Q}[t]=\mathbf{q}]\leq \sum_{i=1}^N q_i(\lambda_i+\delta).
\end{align}

Use this bound in the Lyapunov drift upper bound:
\begin{align}
\Delta_tV(\mathbf{q})&\leq \sum_{i=1}^N q_i (\lambda_i-\mathbb{E}[R^{\mathbf{f}^G}_i|\mathbf{Q}[t]=\mathbf{q}])+B\leq \sum_{i=1}^N q_i(\lambda_i-(\lambda_i+\delta))+B\nonumber\\
&\leq -\delta\sum_{i=1}^N q_i +B.
\end{align}

Applying Foster-Lyapunov Theorem, we know the queueing network is stable.
\end{appendices}

\end{document}